\theoremstyle{thmstyleone}%
\newtheorem{theorem}{Theorem}
\newtheorem{proposition}[theorem]{Proposition}%
\theoremstyle{thmstyletwo}%
\theoremstyle{thmstylethree}%
\newtheorem{assumption}{Assumption}
\DeclarePairedDelimiter{\abs}{\lvert}{\rvert}
\begin{document}

\title[Basic concepts for the Kermack and McKendrick model with static heterogeneity]{Basic concepts for the Kermack and McKendrick model with static heterogeneity}


\author{\fnm{Hisashi} \sur{Inaba}}\email{inaba57@u-gakugei.ac.jp}



\affil*[1]{\orgdiv{Department of Education}, \orgname{Tokyo Gakugei University}, \orgaddress{\street{4-1-1 Nukuikita-machi}, \city{Koganei-shi}, \postcode{184-8501}, \state{Tokyo}, \country{Japan}}}




\abstract{ In this paper, we consider the infection-age-dependent Kermack--McKendrick model in which host individuals are distributed in a continuous state space.
To provide a mathematical foundation for the heterogeneous model, we develop a $L^1$-framework to formulate basic epidemiological concepts.
  First, we show the mathematical well-posedness of the basic model under appropriate conditions  allowing the unbounded parameters with non-compact domain. Next we define the basic reproduction number and prove the pandemic threshold theorem.  We then present a systematic procedure to compute the effective reproduction number and the herd immunity threshold. Finally we give some illustrative examples and concrete results by using the separable mixing assumption.   
}

\keywords{Kermack-McKendrick model, basic reproduction number, effective reproduction number, herd immunity threshold, heterogeneity}


\pacs[MSC Classification]{92D30, 92D25, 45G10}

\maketitle

\section{Introduction}\label{sec1}
Among the mathematical studies inspired by COVID-19, the Kermack-McKendrick epidemic model of 1927 \cite{Kermack1927} has played an important role.  In particular, it was recognized that to understand the complex dynamics of recurrent waves of epidemics, we need to extend the basic model to account for individual heterogeneity described by the continuous trait variable that reflects biological or social heterogeneity of individuals \cite{Britton2020, Diekmann2023, Gomes2022, Montalban2022, Neipel2020, Tkachenko2021a, Tkachenko2021b}.

As a special case of individual heterogeneity described by continuous trait variables, spatial extensions of the Kermack--McKendrick model have a long history.  Kendall \cite{Kendall1957} proposed a spatial extension of the Kermack--McKendrick model and stated his {\it Pandemic Threshold Theorem}.  Baily \cite{Bailey1975} defines a {\it pandemic} when the proportion of individuals contracting the disease, whatever the distance from the initial focus of infection is, is greater or equal to the root of the final size equation, provided that the basic reproduction number exceeds unity.
Diekmann \cite{Diekmann1978}, Thieme \cite{Thieme1977a, Thieme1977b},  Webb \cite{Webb1980, Webb1981}, Rass and Radcliffe \cite{Rass2003} and Inaba \cite{Inaba2014} extended Kendall's pandemic threshold result to the infection-age structured Kermack--McKendrick model with continuous individual heterogeneity.

On the other hand,  the complex behavior of COVID-19 has suggested us that the social and biological individual heterogeneity would play a most important role in the global dynamics of an outbreak.  Therefore, we need to develop a mathematical model that can take into account the social and biological individual heterogeneity, which could be qualitatively different from the geographic distribution.  It should be noted that before COVID-19, several authors had already studied such kind of general heterogeneous models (\cite{Katriel2012}, \cite{Novozhilov2008}, \cite{Novozhilov2012}, \cite{Thieme2009}).
As is mentioned above, the practical importance of the heterogeneity model has been widely discussed by many authors in recent years, but its rigorous mathematical foundation is still unclear.

In this paper, we consider the infection-age-dependent Kermack--McKendrick model, where host individuals are distributed in a continuous state space $\Omega \subset \mathbb R^n$.
To provide mathematical foundation for the heterogeneous model with unbounded susceptibility and non-compact domain, we develop a $L^1$-framework to formulate basic epidemiological concepts.
As is shown in Diekmann and Inaba \cite{Diekmann2023}, the basic model is written as a simple renewal equation, which essentially has only one model component (the force of infection), and can be reduced to a variety of compartmental models under certain assumptions.  First, we prove the well-posedness of the basic system.  Next we define the basic reproduction number and prove the pandemic threshold theorem.  We then present a systematic procedure to compute the effective reproduction number and the herd immunity threshold. Finally we give some illustrative examples and concrete results by using the separable mixing assumption.  In particular, we discuss the critical condition for epidemic resurgence.

\section{The general Kermack--McKendrick model with static heterogeneity}\label{sec2}

We consider a closed population with static individual heterogeneity. Let $S(t,x)$ be the density of completely susceptible, never infected population with individual trait $x$ at time $t$. From our biological interpretation, it is reasonable to assume that $S(t,\cdot) \in L^1_+(\Omega)$. 
The Kermack and McKendrick epidemic model with individual static heterogeneity $x \in \Omega$ for a closed population is formulated as follows \cite{Bootsma2023a, Bootsma2023b, Diekmann2023}:

\begin{equation}\label{sir1}
\begin{aligned}
&\frac{\partial S(t,x)}{\partial t}=-F(t,x) S(t,x),  \cr
&F(t,x)= \int_\Omega \int_{0}^{\infty}A(\tau, x, \sigma)F(t-\tau,\sigma)S(t-\tau,\sigma)d\tau d\sigma, 
\end{aligned}
\end{equation}
where $x \in \Omega$ is the individual heterogeneity (trait) parameter, $\Omega \subset \mathbb R^n$ is its state space and $A(\tau, x,\sigma)$ denotes the expected contribution from infecteds with trait $\sigma$ to the force of infection applied to susceptibles of trait $x$ as a function of the time $\tau$ elapsed since infection took place.

For simplicity, we consider only the case of continuous heterogeneity distribution.  To deal with a mixture of continuous and discrete heterogeneity, we can use the measure-theoretic formulation \cite{Diekmann2023}.  For basic observations about the model \eqref{sir1}, the reader can refer to chapter 6 of \cite{Diekmann2000} or chapter 8 of \cite{Diekmann2013}.  It is also noted that this kind of the renewal equation formulation is also useful to consider the Kermack--McKendrick {\it endemic} model \cite{Breda2012}.  The reader may find McKendrick equation (PDE) approach to the Kermack--McKendrick model in \cite{Inaba2001,Inaba2014, Inaba2016, Inaba2017}.

Under additional assumptions, the basic system \eqref{sir1} can be reduced to a variety of compartment models (see \cite{Diekmann2013}, Chap. 8).  Here a simple example is given to show the reduction of the basic integral system to the traditional compartmental SIR model. 

\subsubsection*{\bf{Example 1}}
Let $\gamma$ be the recovery rate and let $I(t,x)$ be the density of infected individuals.  Then we have
\begin{equation}
I(t,x)=\int_{-\infty}^{t}e^{-\gamma(t-\tau)}F(\tau,x)S(\tau,x)d\tau,
\end{equation}
and it follows that
\begin{equation}
\frac{\partial I(t,x)}{\partial t}=-\gamma I(t,x)+F(t,x)S(t,x),
\end{equation}
which can be supplemented with 
\begin{equation}
\frac{\partial R(t,x)}{\partial t}=\gamma I(t,x),
\end{equation}
where $R(t,x)$ denotes the density of recovered individuals.  
Suppose that $A(\tau,x,\sigma)=e^{-\gamma \tau}\beta(x,\sigma)$.  Then we have
\begin{equation}\label{F}
\begin{aligned}
F(t,x)&=\int_\Omega \int_{0}^{\infty}A(\tau, x, \sigma)F(t-\tau,\sigma)S(t-\tau,\sigma)d\tau d\sigma \cr 
&=\int_{\Omega}\beta(x,\sigma)I(t,\sigma)d\sigma.
\end{aligned}
\end{equation}
Then the Kermack and McKendrick model \eqref{sir1} is reduced to the ``SIR'' (compartmental) epidemic model with trait varibale $x$:
\begin{equation}\label{SIR}
\begin{aligned}
&\frac{\partial S(t,x)}{\partial t}=- S(t,x)\int_{\Omega}\beta(x,\sigma)I(t,\sigma)d\sigma,  \cr
&\frac{\partial I(t,x)}{\partial t}=S(t,x)\int_{\Omega}\beta(x,\sigma)I(t,\sigma)d\sigma-\gamma I(t,x),\cr
&\frac{\partial R(t,x)}{\partial t}=\gamma I(t,x).
\end{aligned}
\end{equation}

Kendall \cite{Kendall1957} used the above SIR model \eqref{SIR} with heterogeneity to study the geographic spread of the epidemic, and formulated the Pandemic Threshold Theorem (PTT), where $x$ is the spatial location of the individual.  
Hadeler \cite{Hadeler2017} considered this type of model by assuming that $x$ is the chronological age of individuals.  In fact, as far as we consider fast epidemics compared to the demographic timescale, aging and demographic turnover have no remarkable effect on the epidemics, so the chronological age of individuals can be considered as a static heterogeneity.  Moreover, as is shown in \cite{Tkachenko2021a}, \cite{Montalban2022} and \cite{Diekmann2023}, the heterogeneous SIR model \eqref{SIR} can be reduced to a ODE system with nonlinear incidence rate if the host heterogeneity distribution is given by the gamma distribution and $\beta(x,\sigma)$ is proportional to $x$.
As was pointed out in Thieme \cite{Thieme2009}, the infinite-dimensional ODE model is mathaematically quite difficult if the parameters are not bounded in the non-compact domain, so we start from the renewal integral equation.

\section{The initial value problem}

For the heterogeneous model \eqref{sir1}, the existence and uniqueness for the total orbit starting from the disease-free steady state $S(-\infty,x)=N(x)$ is still an open problem, although it has been solved for the homogeneous case \cite{Diekmann1977}. Instead, following Rass and Radcliff \cite{Rass2003}, we consider here the initial value problem for the basic system \eqref{sir1}.  

Let $N(\cdot) \in X_+:=L^1_+(\Omega)$ be the distribution of host population heterogeneity (traits) in a steady state and $\|N\|_X$ is the total host size: $\|N\|_X:=\int_\Omega \abs{N(x)} dx$, where $\|\cdot \|_X$ denotes the $L^1$ norm of the space $X:=L^1(\Omega)$. 
We assume that at time $t=0$, the totally susceptible host population $N(x)$ is exposed to infection from the outside.  
So $S(0,x)=N(x)$ and it holds that

\begin{equation}\label{sir3}
\begin{aligned}
&\frac{\partial S(t,x)}{\partial t}=-F(t,x)S(t,x),  \cr
&F(t,x)= \int_\Omega \int_{0}^{t}A(\tau, x, \sigma) F(t-\tau,\sigma)S(t-\tau,\sigma) d\tau d\sigma+G(t,x),
\end{aligned}
\end{equation}
where $t>0$ and $G(t,x)$ is the given initial data for the force of infection, which is assumed to be generated by an external infection.

For the initial value problem, its mathematical well-posedness can be established by reducing it to the problem of the integral equation for the cumulative force of infection.
Let $W(t,x)$ be the cumulative force of infection:
\begin{equation}\label{3.2}
W(t,x):=\int_{0}^{t}F(\tau, x)d\tau.
\end{equation}
Then we have 
\begin{equation}\label{3.3}
S(t,x)=N(x)e^{-W(t,x)},
\end{equation}
 and
\begin{equation}\label{sir*}
\begin{aligned}
W(t,x)&=H(t,x)+\int_{0}^{t}d\eta \int_\Omega \int_{0}^{\eta}A(\tau, x, \sigma) \left(-\frac{\partial S}{\partial \eta}(\eta-\tau,\sigma)\right) d\tau d\sigma \cr
&=H(t,x)+ \int_\Omega \int_{0}^{t}A(\tau, x, \sigma)(N(\sigma)-S(t-\tau,\sigma))d\tau d\sigma \cr
&=H(t,x)+ \int_\Omega \int_{0}^{t}A(\tau, x, \sigma)N(\sigma)(1-e^{-W(t-\tau,\sigma)})d\tau d\sigma \cr
\end{aligned}
\end{equation}
where 
\begin{equation}
H(t,x):=\int_{0}^{t}G(\tau,x)d\tau.
\end{equation}

The integral equation \eqref{sir*} was traditionally studied as the Volterra-Hammerstein integral equation \cite{Thieme1977a, Thieme1977b, Thieme1980}. Diekmann \cite{Diekmann1978} used the renewal equation \eqref{sir*} to analyze the Kermack--McKendrick model with spatial heterogeneity. 
For traditional space-dependent models, much attention has been paid to the spatial interaction kernel like as $A(\tau,x,\sigma)=u(\tau)v(x-\sigma)$ and its traveling wave solutions in a noncompact domain $\Omega$.  On the other hand, recent COVID-19 inspired studies assumes that the feature is not necessarily the spatial heterogeneity, but the biological or social individual heterogeneity, so the separable kernel as $A(\tau,x,\sigma)=a(x)b(\tau)c(\sigma)$ played a much more important rule \cite{Diekmann2023}.   

To treat unbounded susceptibility and its non-compact state space, we first give  existence and uniqueness results for the solution of \eqref{sir*} under new assumptions different from \cite{Diekmann1978} and Inaba \cite{Inaba2014}. 
Since $F(t,x)S(t,x)$ should be integrable with respect to $x$, it is reasonable to expect that $W \in BC_+(\mathbb R_+;Y)$, where $Y$ be a Banach space defined by $Y:=\left\{N^{-1}\psi: \psi \in  X\right\}$ with norm $\|\phi\|_Y=\int_{\Omega} \abs{\phi(x)} N(x)dx$. 
Then we have  $\|\phi\|_Y=\|\phi N\|_X$.  
Hence, we assume that a map $P: \phi \to N\phi$ is an isometry from a Banach space $Y$ to a Banach space $X$. 
For any $T>0$, let $C_T:=C([0,T];Y)$ be the set of continuous functions on $[0,T]$ equipped with the norm $\|f\|_{C_T}:=\sup_{0 \le t \le T}\|f(t,\cdot)\|_Y$.  
We also make the following technical assumptions: 

\begin{assumption}\label{ass1}
\begin{enumerate}
\item There exist nonnegative measurable functions $a$, $b$, $c$ and a number $\alpha>1$ such that
\begin{equation}\label{ass11}
a(x)b(\tau)c(\sigma) \le A(\tau,x,\sigma) \le \alpha a(x)b(\tau)c(\sigma).
\end{equation}
where  $a \in Y_+$, $b \in L^1_+(\mathbb R_+)$ and $c^n \in Y_+$ for $n=1, 2$.
Furthermore, there exists a bounded continuous function $Q \in BC_+(\mathbb R_+)$ such that
\begin{equation}\label{ass111}
a(x)Q(t) \le H(t,x) \le \alpha a(x)Q(t).
\end{equation}
\item It holds that
\begin{equation}\label{ass12}
\lim_{h \to 0}\int_\Omega \int_\Omega  \int_{0}^{\infty}N(x) \abs{A(\tau+h,x,\sigma)-A(\tau,x,\sigma)} N(\sigma)d\tau d\sigma dx=0,
\end{equation}
where we assume that $A(\tau+h,x,\sigma)=0$ if $\tau+h \notin \mathbb R_+$ and

\begin{equation}\label{ass13}
\lim_{h \to 0}\int_\Omega \int_\Omega  \int_{0}^{\infty} \abs{N(x+h)A(\tau,x+h,\sigma)-N(x)A(\tau,x,\sigma)} N(\sigma)d\tau d\sigma dx=0,
\end{equation}
where we assume that $N(x+h)A(\tau,x+h,\sigma)=0$ if $x+h \notin \Omega$.

\end{enumerate}
\end{assumption}

Let $C_{T+}$ be the positive cone of $C_T$. Define an operator $\mathcal F$ for $\phi \in C_{T+}$ by
\begin{equation}
\mathcal F (\phi)(t,x)=H(t,x)+ \int_\Omega \int_{0}^{t}A(\tau, x, \sigma)N(\sigma)(1-e^{-\phi(t-\tau,\sigma)})d\tau d\sigma,
\end{equation}
where we assume $H \in C_{T+}$.

\begin{proposition}\label{prop1}
 $\mathcal F$ is a continuous map from $C_{T+}$ into itself.
\end{proposition}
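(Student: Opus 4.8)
The plan is to check, in turn, that for $\phi\in C_{T+}$ the function $\mathcal F(\phi)$ is (i) nonnegative and takes values in $Y$ with $\sup_{t\le T}\|\mathcal F(\phi)(t,\cdot)\|_Y<\infty$, (ii) continuous in $t$, hence an element of $C_{T+}$, and (iii) depends continuously on $\phi$. The common engine for all three steps is the two-sided estimate \eqref{ass11}--\eqref{ass111}, which lets me dominate $A(\tau,x,\sigma)$ by the separable majorant $\alpha\,a(x)b(\tau)c(\sigma)$ and $H(t,x)$ by $\alpha\,a(x)Q(t)$, combined with the elementary inequalities $0\le 1-e^{-u}\le 1$ and $\abs{e^{-u}-e^{-v}}\le 1\wedge\abs{u-v}$ for $u,v\ge 0$. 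Since every integrand that appears is nonnegative, all interchanges of integration order are legitimate by Tonelli.

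Step (i). Nonnegativity of $\mathcal F(\phi)(t,x)$ is immediate from $H\ge 0$, $A\ge 0$, $N\ge 0$ and $1-e^{-\phi}\ge 0$. For the norm bound, using $1-e^{-\phi}\le 1$ and the majorant,
\[
\mathcal F(\phi)(t,x)\le \alpha\,a(x)Q(t)+\alpha\,a(x)\Bigl(\int_0^t b(\tau)\,d\tau\Bigr)\int_\Omega c(\sigma)N(\sigma)\,d\sigma\le \alpha\,a(x)\bigl(\,\sup_{s\ge 0}Q(s)+\|b\|_{L^1}\|c\|_Y\,\bigr),
\]
and integrating against $N$ gives $\|\mathcal F(\phi)(t,\cdot)\|_Y\le \alpha\|a\|_Y(\sup_{s\ge 0}Q(s)+\|b\|_{L^1}\|c\|_Y)$, which is finite because $a,c\in Y_+$ and $b\in L^1_+$, and independent of $t$. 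Measurability of $\mathcal F(\phi)(t,\cdot)$ follows from Tonelli as well.

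Step (ii). After the substitution $s=t-\tau$ one has $\mathcal F(\phi)(t,x)=H(t,x)+\int_\Omega\int_0^t A(t-s,x,\sigma)N(\sigma)\bigl(1-e^{-\phi(s,\sigma)}\bigr)\,ds\,d\sigma$. For $t'=t+h$ with $h$ small, the $H$-part of $\|\mathcal F(\phi)(t',\cdot)-\mathcal F(\phi)(t,\cdot)\|_Y$ tends to $0$ because $H\in C_{T+}$; the remaining increment splits, on changing variables back, into a ``kernel-shift'' piece whose $N$-weighted $L^1$-norm is bounded, via $1-e^{-\phi}\le 1$, by the left-hand side of \eqref{ass12} and therefore tends to $0$ as $h\to 0$, and a ``boundary-layer'' piece supported on elapsed times in $[0,\abs{h}]$ whose $N$-weighted $L^1$-norm is at most $\alpha\|a\|_Y\|c\|_Y\int_0^{\abs{h}}b(\tau)\,d\tau$, which tends to $0$ since $b\in L^1_+$. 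Hence $t\mapsto\mathcal F(\phi)(t,\cdot)$ is continuous and $\mathcal F(\phi)\in C_{T+}$.

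Step (iii), \emph{the crux}. Let $\psi_n\to\phi$ in $C_T$ with $\phi,\psi_n\in C_{T+}$. Writing the difference of the two integrals and using $\abs{e^{-\phi}-e^{-\psi_n}}\le 1\wedge\abs{\phi-\psi_n}$ together with the majorant, one gets for every $t\le T$
\[
\|\mathcal F(\phi)(t,\cdot)-\mathcal F(\psi_n)(t,\cdot)\|_Y\le \alpha\|a\|_Y\|b\|_{L^1}\sup_{0\le s\le T}\int_\Omega c(\sigma)N(\sigma)\bigl(1\wedge\abs{\phi(s,\sigma)-\psi_n(s,\sigma)}\bigr)\,d\sigma ,
\]
so it remains to show the right-hand supremum tends to $0$. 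The obstacle is that $C_T$-convergence only controls the $N$-weighted $L^1$-norm of $\phi-\psi_n$, whereas the integral above carries the extra, possibly unbounded, weight $c$ --- this is exactly where ``unbounded susceptibility'' bites, and a global Lipschitz estimate for $\mathcal F$ in the $C_T$-norm is not available. I would overcome this by truncation: split $\Omega$ into $\{c\le M\}$, where the integral is at most $M\|\phi-\psi_n\|_{C_T}$, and $\{c>M\}$, where (using $1\wedge(\cdot)\le 1$) it is at most $\int_{\{c>M\}}c(\sigma)N(\sigma)\,d\sigma$; the latter is independent of $n$ and of $s$ and tends to $0$ as $M\to\infty$ by absolute continuity of the integral, since $cN\in L^1(\Omega)$. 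Fixing $M$ first to make this tail small and then taking $n$ large makes the supremum arbitrarily small, which gives continuity of $\mathcal F$. Note that among the integrability hypotheses only $c\in Y_+$ (not $c^2\in Y_+$) enters here, and that \eqref{ass12}--\eqref{ass13} are used only for the $t$- (and, later, $x$-) regularity, not for the continuity claim itself.
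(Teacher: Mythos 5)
Your steps (i) and (ii) follow the paper's own argument: the uniform bound comes from the separable majorant in \eqref{ass11}--\eqref{ass111}, and the $t$-continuity comes from splitting the increment into a kernel-shift term controlled by \eqref{ass12} and a boundary-layer term controlled by $b\in L^1$, exactly as in the paper's $Q_1$, $Q_2$ decomposition. Where you genuinely diverge is the continuity in $\phi$. The paper inserts the Cauchy--Schwarz inequality to split $\int_\Omega cN\,\abs{e^{-\phi_1}-e^{-\phi_2}}\,d\sigma$ into a factor bounded by $2\|c^2\|_Y$ (this is where the hypothesis $c^2\in Y_+$ is consumed) and a factor bounded by $\|\phi_1-\phi_2\|_{C_T}$, arriving at the quantitative H\"older-type estimate $\|\mathcal F(\phi_1)-\mathcal F(\phi_2)\|_{C_T}\le \alpha L\|a\|_Y\sqrt{2\|c^2\|_Y\|\phi_1-\phi_2\|_{C_T}}$. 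You instead truncate the weight: on $\{c\le M\}$ the integral is $\le M\|\phi-\psi_n\|_{C_T}$, and on $\{c>M\}$ it is $\le\int_{\{c>M\}}cN$, which vanishes as $M\to\infty$ since $cN\in L^1$. Both arguments are correct. Yours is weaker in one respect and stronger in another: it yields only qualitative (sequential) continuity with no modulus, but it needs only $c\in Y_+$ rather than $c^2\in Y_+$ --- as you rightly note, the square-integrability of $c$ against $N$ plays no role in your version of this step (the paper does reuse the same Cauchy--Schwarz device, and hence $c^2\in Y_+$, in Proposition 4, so the hypothesis is not removable from the paper as a whole). Since the fixed-point existence in Proposition 2 is obtained by compactness rather than by a contraction, the loss of the explicit square-root modulus costs nothing downstream, and your proof is a legitimate, slightly more economical alternative.
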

   \begin{proof}
   For $\phi \in C_{T+}$, we have
   \begin{equation}\label{3.10}
   \mathcal F(\phi)(t,x) \le H(x,t)+\int_\Omega \int_{0}^{t}A(\tau, x, \sigma)N(\sigma)d\tau d\sigma \le \alpha a(x)Q(t)+\alpha L a(x)\|c\|_Y,
   \end{equation}
where $L:=\int_{0}^{\infty}b(\tau)d\tau$.
   Then for every $t \in [0,T]$, we have $\mathcal F(\phi)(t,\cdot) \in Y_+$ and $\|F(\phi)(t,\cdot)\|_Y$ is uniformly bounded. Let
\[J(t,x;\phi):=\int_\Omega \int_{0}^{t}A(t-\tau, x, \sigma)N(\sigma)(1-e^{-\phi(\tau,\sigma)})d\tau d\sigma.
\]
Then we have
\begin{equation}
\int_\Omega \abs{J(t_1,x;\phi)-J(t_2,x;\phi)} N(x)dx \le Q_1(t_1,t_2)+Q_2(t_1,t_2),
\end{equation}
where
\[Q_1(t_1,t_2):=\int_\Omega dx \int_\Omega d\sigma \int_{0}^{t_1} N(x)\abs{A(t_1-\tau,x,\sigma)-A(t_2-\tau,x,\sigma)} N(\sigma)d\tau,\]
\[Q_2(t_1,t_2):=\int_\Omega dx \int_\Omega d\sigma \int_{t_1}^{t_2} N(x)\abs{A(t_2-\tau,x,\sigma)}N(\sigma) d\tau. \]
From our assumption \eqref{ass11}-\eqref{ass12}, for any $\epsilon>0$ there exists $\delta>0$ such that 
\begin{equation}\label{3.13}
 \|\mathcal F(\phi)(t_1,\cdot)-\mathcal F(\phi)(t_2,\cdot)\|_Y \le  \|H(t_1,\cdot)-H(t_2,\cdot)\|_Y+Q_1(t_1,t_2)+Q_2(t_1,t_2)<\epsilon,
 \end{equation}
 if $\abs{t_1-t_2}<\delta$.  Then $\mathcal F(\phi) \in C_{T+}$.  Next observe that for $\phi_j \in C_{T+}$,
 \begin{equation}
 \|\mathcal F(\phi_1)(t,\cdot)-\mathcal F(\phi_2)(t,\cdot)\|_Y
 \le \alpha L \|a\|_Y  \int_\Omega c(\sigma)N(\sigma) \abs{e^{-\phi_1(t-\tau,\sigma)}-e^{-\phi_2(t-\tau,\sigma)}} d\sigma.
 \end{equation}
 From the Schwarz inequality, it follows that
 \[\begin{aligned}
  &\int_\Omega c(\sigma)N(\sigma) \abs{e^{-\phi_1(t-\tau,\sigma)}-e^{-\phi_2(t-\tau,\sigma)}} d\sigma \cr
&\le  \left(\int_\Omega c^2(\sigma)N(\sigma) \abs{e^{-\phi_1(t-\tau,\sigma)}-e^{-\phi_2(t-\tau,\sigma)}} d\sigma \right)^{\frac{1}{2}} \cr
& ~~~~~~\times \left( \int_\Omega N(\sigma) \abs{e^{-\phi_1(t-\tau,\sigma)}-e^{-\phi_2(t-\tau,\sigma)}} d\sigma \right)^{\frac{1}{2}},
\end{aligned}
\]
where
\[\int_\Omega c^2(\sigma)N(\sigma) \abs{e^{-\phi_1(t-\tau,\sigma)}-e^{-\phi_2(t-\tau,\sigma)}} d\sigma\le 2\|c^2\|_Y,\]
and
\[\begin{aligned}
&\int_\Omega N(\sigma) \abs{e^{-\phi_1(t-\tau,\sigma)}-e^{-\phi_2(t-\tau,\sigma)}} d\sigma \cr & \le\int_\Omega N(\sigma) \abs{\phi_1(t-\tau,\sigma)-\phi_2(t-\tau,\sigma)} d\sigma
\le \|\phi_1-\phi_2\|_{C_T}.
\end{aligned}
\]
 So we have
 \begin{equation}
 \|\mathcal F(\phi_1)-\mathcal F(\phi_2)\|_{C_T} \le \alpha L \|a\|_Y \sqrt{2\|c_2\|_Y\|\phi_1-\phi_2\|_{C_T}}.
 \end{equation}
Then $\mathcal F$ is a continuous positive operator. 
 \end{proof}

\begin{proposition}\label{prop2}
$\mathcal F$ has at least one fixed point. 
\end{proposition}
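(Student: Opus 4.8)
The plan is to produce the fixed point via Schauder's fixed point theorem, applied to $\mathcal F$ restricted to an invariant, closed, bounded, convex set. Set $L:=\int_0^\infty b(\tau)\,d\tau$ and $M:=\alpha\bigl(\sup_{0\le t\le T}Q(t)+L\|c\|_Y\bigr)$, and let
\[
K:=\bigl\{\phi\in C_{T+}:\ 0\le \phi(t,x)\le M\,a(x)\ \text{for all }t\in[0,T]\ \text{and a.e. }x\in\Omega\bigr\}.
\]
By \eqref{ass111} one has $H\in K$, so $K$ is nonempty; it is clearly convex, it is bounded in $C_T$ because $\|\phi\|_{C_T}\le M\|a\|_Y$ for $\phi\in K$, and it is closed since $C_T$-convergence forces pointwise a.e.\ convergence along a subsequence. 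The estimate \eqref{3.10} gives $\mathcal F(\phi)(t,x)\le M\,a(x)$, hence $\mathcal F(K)\subset K$, and $\mathcal F\colon K\to K$ is continuous by Proposition~\ref{prop1}. So everything reduces to showing that $\mathcal F(K)$ is relatively compact in $C_T=C([0,T];Y)$.

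By the Arzel\`a--Ascoli theorem for continuous functions valued in the Banach space $Y$, I would verify two points: (i) $\mathcal F(K)$ is equicontinuous in $t$, uniformly over $\phi\in K$; and (ii) for each fixed $t\in[0,T]$ the set $\{\mathcal F(\phi)(t,\cdot):\phi\in K\}$ is relatively compact in $Y$. Point (i) is essentially already available: the right-hand side of \eqref{3.13}, namely $\|H(t_1,\cdot)-H(t_2,\cdot)\|_Y+Q_1(t_1,t_2)+Q_2(t_1,t_2)$, does not depend on $\phi$ and tends to $0$ as $|t_1-t_2|\to 0$, using $H\in C_{T+}$, assumption \eqref{ass12}, and $b\in L^1$. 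For point (ii), write $\mathcal F(\phi)(t,\cdot)=H(t,\cdot)+T_t g_\phi$, where $g_\phi(\tau,\sigma):=1-e^{-\phi(t-\tau,\sigma)}\in[0,1]$ and $(T_t g)(x):=\int_\Omega\int_0^t A(t-\tau,x,\sigma)N(\sigma)g(\tau,\sigma)\,d\tau\,d\sigma$. Composing with the isometry $P\colon\phi\mapsto N\phi$, the operator $PT_t$ is the integral operator from $L^\infty([0,t]\times\Omega)$ into $X$ whose kernel $N(x)A(t-\tau,x,\sigma)N(\sigma)$ is, by \eqref{ass11}, dominated by $\alpha\,(aN)(x)\,b(t-\tau)\,(cN)(\sigma)$ and therefore lies in $L^1(\Omega\times[0,t]\times\Omega)$, since $aN,\,cN\in L^1(\Omega)$ and $b\in L^1(\mathbb R_+)$. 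An integral operator from $L^\infty$ into $L^1$ with an $L^1$ kernel is compact, as one sees by approximating the kernel in $L^1$ by simple functions, which exhibits $T_t$ as a norm limit of finite-rank operators; hence $T_t(\{g:0\le g\le 1\})$ is relatively compact in $Y$, and adjoining the single point $H(t,\cdot)$ preserves relative compactness. This gives (ii).

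With (i) and (ii) in hand, Arzel\`a--Ascoli shows $\mathcal F(K)$ is relatively compact in $C_T$, and Schauder's theorem then produces a fixed point $\phi_*\in K$ of $\mathcal F$, which is precisely a solution of \eqref{sir*}. I expect point (ii) to be the main obstacle: because $\Omega$ is non-compact and $Y$ is an $L^1$-type space, compactness cannot be read off from the $\tau$-smoothing in $\mathcal F$ alone, and it is exactly the integrability built into Assumption~\ref{ass1} (the finiteness of $\|a\|_Y$, $\|c\|_Y$ and $L$) that makes the kernel of $T_t$ an $L^1$ function, hence $T_t$ compact. A slightly different route that avoids Schauder is monotone iteration: with $\phi_0:=0$ and $\phi_{n+1}:=\mathcal F(\phi_n)$, the monotonicity of $\phi\mapsto 1-e^{-\phi}$ makes $(\phi_n)$ nondecreasing and bounded above by $M\,a$, so it converges pointwise to some $\phi_*\le M\,a$ that solves \eqref{sir*} by the monotone convergence theorem, and continuity of $\phi_*$ in $t$ then follows a posteriori from the $\phi$-uniform bound \eqref{3.13}.
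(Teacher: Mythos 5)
Your argument is correct, but it takes a genuinely different route from the paper. The paper works with the specific iteration $\phi_0=H$, $\phi_n=\mathcal F(\phi_{n-1})$, proves that this orbit is relatively compact in $C_T$ by combining the equicontinuity estimate \eqref{3.13} with the Fr\'echet--Kolmogorov criterion applied to $\{\phi_n(t,\cdot)N\}\subset L^1(\Omega)$ --- which is where the translation-continuity hypothesis \eqref{ass13} enters --- and then extracts a convergent subsequence via Ascoli, the limit being a fixed point (this last step implicitly uses that the iteration is monotone, so that the whole sequence, not just a subsequence, converges). You instead invoke Schauder on the order interval $K=\{0\le\phi\le Ma\}$, which \eqref{3.10} and \eqref{ass111} show is invariant, and you obtain the pointwise-in-$t$ compactness not from Fr\'echet--Kolmogorov but from the observation that the kernel $N(x)A(t-\tau,x,\sigma)N(\sigma)$ is dominated by $\alpha(aN)(x)b(t-\tau)(cN)(\sigma)\in L^1$ of the product space, so the associated operator from $L^\infty$ to $L^1$ is a norm limit of finite-rank operators (approximating the kernel in $L^1$ by combinations of indicators of measurable rectangles). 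This is a clean trade: your compactness argument uses only the integrability built into Assumption~\ref{ass1}(1) and does not need \eqref{ass13} at all, so for the existence statement that hypothesis becomes superfluous, whereas the paper's route needs it. Your alternative monotone-iteration argument is essentially the paper's sequence started at $0$ instead of $H$, with the compactness extraction replaced by monotone convergence (B.~Levi) under the uniform bound $\phi_n\le Ma$; it is more elementary and, as a bonus, makes fully transparent why the limit of the iterates is actually a fixed point, a point the paper's subsequence argument glosses over. The only small points to tidy are that closedness of $K$ should be read as ``for each $t$, a.e.\ in $x$'' (the a.e.-convergent subsequence depends on $t$), and that one should fix a jointly measurable representative of $(\tau,\sigma)\mapsto\phi(t-\tau,\sigma)$ before forming $g_\phi$; both are routine.
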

\begin{proof}
Define a subset $\Phi:=\{\phi_n: n=0,1,2,\cdots\} \subset C_{T+}$, where
\begin{equation}
\phi_0=H, \quad \phi_n=\mathcal F(\phi_{n-1}), ~(n \ge 1).
\end{equation} 
From \eqref{3.13} in the proof of Proposition \ref{prop1}, $\Phi$ is equicontinuous on $[0,T]$.  Next, for each $t \in [0,T]$, define a set $\Phi(t):=\{\phi_n(t,\cdot): n=0,1,2,\cdots\} \subset Y_+$.  From the assumption \ref{ass1}, it follows that $\sup_{n=0,1,2,..}\|\phi_n(t,\cdot)\|_Y<\infty$, and it follows from \eqref{3.10} that 
\begin{equation}
\lim_{r \to \infty}\int_{r}^{\infty} \abs{\phi_n(t,x)} N(x)dx=0,\quad \lim_{h \to +0}\int_{0}^{h} \abs{ \phi_n(t,x)} N(x)dx=0,
\end{equation}
uniformly in $\phi_n(t,\cdot) \in \Phi(t)$.  From \eqref{ass13}, we have
\begin{equation}
\lim_{h \to 0}\int_\Omega \abs{ N(x+h)\phi_n(t,x+h)-N(x)\phi_n(t,x)} dx=0,
\end{equation}
uniformly in $\phi_n(t,\cdot) \in \Phi(t)$.  
Then it follows from the Fr\'echet-Kolmogorov criterion (Theorem B.2. in \cite{Smith2011}) 
that the set $\{\phi_n(t,\cdot)N\}_{n=1,2,...} \subset L^1_+(\Omega)$ has compact closure.  
Then we can choose a convergent subsequence from $\Phi(t)$, so $\Phi(t)$ is relatively compact in $Y_+$.  Thanks to Ascoli's theorem \cite{Lang1993}, $\Phi$ is also relatively compact in $C_{T+}$.  Then we can choose a convergent subsequence $\{\phi_{n(k)}\}_{k=1,2,...} \subset \Phi$,  and we conclude that $\lim_{k \to \infty}\phi_{n(k)}=\phi_\infty \in C_{T+}$ exists and $\phi_\infty=\lim_{k \to \infty} \mathcal F(\phi_{n(k-1)})=\mathcal F(\lim_{k \to \infty} \phi_{n(k-1)})=\mathcal F(\phi_\infty)$. Thus $\phi_\infty$ is a positive fixed point of $\mathcal F$. 
\end{proof}

\begin{proposition}\label{prop3}
 $\mathcal F$ has at most one positive fixed point.
\end{proposition}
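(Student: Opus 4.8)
The plan is to run a Gronwall argument on the difference of two fixed points, feeding in the a priori bound already produced inside the proof of Proposition~\ref{prop1}. Suppose $\phi_1,\phi_2\in C_{T+}$ both satisfy $\phi_i=\mathcal F(\phi_i)$. From \eqref{3.10} both obey, on $[0,T]$ and for a.e.\ $x$,
\[
0\le\phi_i(t,x)\le\alpha a(x)\bigl(Q(t)+L\|c\|_Y\bigr)\le\rho\,a(x),\qquad \rho:=\alpha\Bigl(\sup_{[0,T]}Q+L\|c\|_Y\Bigr),
\]
so in particular $\bigl|\phi_1(t,x)-\phi_2(t,x)\bigr|\le2\rho\,a(x)$ (on $\{a=0\}$ we have $A\equiv0$ and $H\equiv0$, hence $\phi_i\equiv0$ there and nothing to prove). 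I would then introduce the scalar quantity
\[
\delta(t):=\int_\Omega c(x)\,N(x)\,\bigl|\phi_1(t,x)-\phi_2(t,x)\bigr|\,dx ,
\]
which is measurable in $t$, satisfies $\delta(0)=0$ because $\phi_i(0,\cdot)=H(0,\cdot)\equiv0$, and is bounded on $[0,T]$ by $2\rho\kappa$, with $\kappa:=\int_\Omega c(x)a(x)N(x)\,dx$.

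Next I would subtract the two fixed-point identities (the external term $H$ cancels) and estimate using $|e^{-u}-e^{-w}|\le|u-w|$ for $u,w\ge0$ together with the upper bound $A\le\alpha a\,b\,c$ of \eqref{ass11}, so that $\int_\Omega c(\sigma)N(\sigma)|\phi_1(s,\sigma)-\phi_2(s,\sigma)|\,d\sigma=\delta(s)$ gives
\[
\bigl|\phi_1(t,x)-\phi_2(t,x)\bigr|\le\alpha\,a(x)\int_0^t b(\tau)\,\delta(t-\tau)\,d\tau .
\]
Multiplying by $c(x)N(x)$ and integrating in $x$ turns this into the \emph{linear} Volterra inequality $\delta(t)\le\alpha\kappa\int_0^t b(\tau)\delta(t-\tau)\,d\tau$, the constant $\kappa$ being finite by the integrability built into Assumption~\ref{ass1} (for instance $\kappa\le\|c^2\|_Y^{1/2}\bigl(\int_\Omega a^2N\bigr)^{1/2}$, or simply $\kappa\le\|c\|_{L^\infty}\int_\Omega aN$ when infectivity is bounded). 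Iterating this inequality and using Young's inequality for convolutions gives $\delta(t)\le2\rho\kappa\,(\alpha\kappa\int_0^Tb)^n$ for every $n$; since $\int_0^h b\to0$ as $h\to0^+$, one has $\alpha\kappa\int_0^{T_*}b<1$ on a short interval $[0,T_*]$, which forces $\delta\equiv0$ there, and the standard step-by-step continuation in blocks of length $T_*$ (using $\delta(t-\tau)=0$ for $t-\tau$ in an interval already treated) extends this to all of $[0,T]$. Plugging $\delta\equiv0$ back into the pointwise bound yields $\phi_1=\phi_2$ a.e., hence in $C_{T+}$.

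The genuinely delicate point, and the reason the estimate of Proposition~\ref{prop1} cannot be reused verbatim, is the weight $c(\sigma)$ in the kernel bound. If one controls $\int_\Omega cN\,|e^{-\phi_1}-e^{-\phi_2}|\,d\sigma$ by the Cauchy--Schwarz device used there, one obtains only a bound of order $\|c^2\|_Y^{1/2}\|\phi_1-\phi_2\|_Y^{1/2}$, hence a \emph{nonlinear} inequality of the form $v(t)\le K\int_0^t b(\tau)\sqrt{v(t-\tau)}\,d\tau$; but the model scalar problem $\dot u=\sqrt u$, $u(0)=0$, has the nonzero solution $u(t)=t^2/4$, so such an inequality does \emph{not} force $v\equiv0$ and uniqueness would not follow. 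The whole point of the scheme above is therefore to (i) keep the Lipschitz estimate for $u\mapsto1-e^{-u}$ at exponent one by measuring the defect through the scalar $\delta$, whose finiteness rests precisely on the a priori bound $\phi_i\le\rho a$, and (ii) push the inconvenient weight entirely into the finite constant $\kappa=\int_\Omega caN$. Verifying $\kappa<\infty$ from the hypotheses, the measurability of $\delta$, and the iteration bound are then routine.
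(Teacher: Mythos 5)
Your strategy --- a linear Volterra/Gronwall iteration on the weighted scalar defect $\delta(t)=\int_\Omega c\,N\,|\phi_1-\phi_2|\,dx$ --- is genuinely different from the paper's argument, and your diagnosis of why the Cauchy--Schwarz estimate from Proposition~\ref{prop1} cannot be recycled (it yields only $v(t)\le K\int_0^t b\sqrt{v(t-\tau)}\,d\tau$, which does not force $v\equiv 0$) is exactly right. The iteration itself, the block-by-block continuation, the treatment of $\{a=0\}$, and the use of the a priori bound $\phi_i\le\rho a$ from \eqref{3.10} are all sound.

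However, there is a genuine gap at the one point you wave at: the finiteness of $\kappa=\int_\Omega c(x)a(x)N(x)\,dx$ does \emph{not} follow from Assumption~\ref{ass1}, and your entire scheme stands or falls with it (if $\kappa=\infty$ then already $\delta(t)\le 2\rho\kappa$ is vacuous and $\delta$ need not be finite). Assumption~\ref{ass1} gives only $aN\in L^1$, $cN\in L^1$, $c^2N\in L^1$; it gives neither $a^2N\in L^1$ nor $c\in L^\infty$, which are the two sufficient conditions you invoke ``for instance''. A concrete counterexample: $\Omega=(0,1)$, $N\equiv 1$, $a(x)=x^{-9/10}$, $c(x)=x^{-2/5}$; then $aN$, $cN$, $c^2N$ are all integrable but $caN=x^{-13/10}$ is not. (The hypothesis $c\in L^\infty$ only enters in Assumption~\ref{ass2}, which the paper does not impose until Section~4; Proposition~\ref{prop3} is proved under Assumption~\ref{ass1} alone.) The paper sidesteps this by never pairing $c$ against $a$ through the full Lipschitz bound: it exploits $0\le 1-e^{-\phi}\le 1$ to sandwich $h(\phi)(t)\,a(x)\le\mathcal F(\phi)(t,x)\le\alpha\,h(\phi)(t)\,a(x)$ with $h(\phi)(t)\le Q(t)+L\|c\|_Y<\infty$, and then runs a cone-ratio argument on $k=\sup\{\mu:\phi_1\ge\mu\phi_2\}$ using the strict concavity defect $\eta(\phi,s)>0$ --- a monotone-concave-operator uniqueness proof in the style of Krasnoselskii rather than a Gronwall estimate. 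Your proof becomes correct as written if you add $c\in L^\infty$ (i.e.\ under Assumptions~\ref{ass1}--\ref{ass2}) or the hypothesis $a^2N\in L^1$; as a proof of Proposition~\ref{prop3} under Assumption~\ref{ass1} alone it is incomplete.
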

\begin{proof}
At first, note that the operator $\mathcal F$ is monotonically non-decreasing positive operator. From our assumption \ref{ass1}, we have
\begin{equation}
h(\phi)(t)a(x) \le \mathcal F(\phi)(t,x)\le \alpha h(\phi)(t) a(x),
\end{equation}
where $h: C_{T+} \to BC_+([0,T]:\mathbb R_+)$ is given by
\begin{equation}
h(\phi)(t):=Q(t)+ \int_\Omega \int_{0}^{t}b(\tau)c(\sigma)N(\sigma)(1-e^{-\phi(t-\tau,\sigma)})d\tau d\sigma.
\end{equation}
It also holds that for $s \in (0,1)$,
\begin{equation}
\mathcal F(s\phi) (t,x) \ge s\mathcal F(\phi) (t,x) + \eta(\phi,s)(t)a(x),
\end{equation}
where
\begin{equation}\label{eta}
\eta(\phi,s)(t):= \int_\Omega \int_{0}^{t}b(\tau)c(\sigma)N(\sigma)(1-e^{-s \phi(t-\tau,\sigma)}-s(1-e^{ -\phi(t-\tau,\sigma)})d\tau d\sigma.
\end{equation}
Then it is easy to see that $\eta(\phi,s)(t)>0$ for any $\phi \in C_{T+} \setminus\{0\}$ and any $s \in (0,1)$.
Suppose that $\mathcal F$ has two non-zero fixed points $\phi_1$ and $\phi_2$.  Then for every $t>0$, $h(\phi_j)(t,x)>0$ and it follows that
\begin{equation}\label{unique1}
\phi_1(t,x)=\mathcal F(\phi_1)(t,x) \ge h(\phi_1)(t)a(x) \ge \frac{h(\phi_1)(t)}{\alpha h(\phi_2)(t)}\mathcal F(\phi_2)(t,x)=\frac{h(\phi_1)(t)}{\alpha h(\phi_2)(t)}\phi_2(t,x).
\end{equation}
For a fixed $t>0$, define a number $k:=\sup \{\mu: \phi_1(t,x)  \ge \mu \phi_2(t,x)\}$. It follows from \eqref{unique1} that $k>0$. Suppose that $0<k<1$.  Then we can observe that
\begin{equation}\label{unique2}
\begin{aligned}
\phi_1(t,x)&=\mathcal F(\phi_1)(t,x) \ge \mathcal F(k\phi_2)(t,x) \ge k\mathcal F(\phi_2) (t,x) + \eta(\phi_2,k)(t)a(x) \cr
&=k\phi_2 (t,x) + \eta(\phi_2,k)(t)a(x) \ge \left(k+\frac{\eta(\phi_2,k)}{\alpha h(\phi_2)(t)}\right)\phi_2(t,x),
\end{aligned}
\end{equation}
which contradicts the definition of $k$.  Then we conclude that $k \ge 1$ and $\phi_1(t,x) \ge \phi_2(t,x)$.  Switching the roles of $\phi_1$ and $\phi_2$, we can repeat the same argument to prove that $\phi_2(t,x) \ge \phi_1(t,x)$, so we conclude that $\phi_1=\phi_2$ for every $t \in [0,T]$.  Then $\mathcal F$ has at most one positive fixed point
\end{proof}

From Propositions \ref{prop2} and \ref{prop3}, the equation \eqref{sir*} has a unique positive solution for any $T>0$, and it is uniformly bounded, so the global positive solution of \eqref{sir*} exists uniquely for $t \in \mathbb R_+$.

\begin{proposition}\label{prop4}
For the global solution of \eqref{sir*}, there exists $W(\infty,\cdot) \in Y_+$ such that $\lim_{t \to \infty}\|W(t,\cdot)-W(\infty,\cdot)\|_Y=0$ and $W(\infty,x)$ satisfies the limit equation
\begin{equation}\label{sir**}
W(\infty,x)=H(\infty,x)+\int_\Omega \Theta(x, \sigma)N(\sigma)(1-e^{-W(\infty,\sigma)}) d\sigma,
\end{equation}
where
\begin{equation}
\Theta(x,\sigma):=\int_{0}^{\infty}A(\tau,x,\sigma)d\tau.
\end{equation}
\end{proposition}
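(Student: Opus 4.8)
The plan is to exploit monotonicity of $W$ in time together with the a priori pointwise bound from \eqref{3.10}: then $W(\infty,\cdot)$ is produced by the monotone convergence theorem, and the limit equation \eqref{sir**} is obtained by letting $t\to\infty$ in \eqref{sir*} via dominated convergence. The one genuinely delicate ingredient is the monotonicity. Since \eqref{sir*} is non-autonomous, $t\mapsto W(t,x)$ being non-decreasing cannot be read off from the monotone operator $\mathcal F$; I would instead return to the renewal formulation \eqref{sir3} and argue that $F=\partial_tW\ge0$. Concretely: using the $L^1$-continuity \eqref{ass12} of $A$ together with $H(0,\cdot)=0$ one checks that the fixed point $W$ is absolutely continuous in $t$; differentiating \eqref{sir*} (after the substitution $\tau\mapsto t-\tau$) then shows that $F$ solves the \emph{linear} Volterra equation $F(t,x)=G(t,x)+\int_\Omega\int_0^t A(t-s,x,\sigma)S(s,\sigma)F(s,\sigma)\,ds\,d\sigma$, whose kernel $A(t-s,x,\sigma)S(s,\sigma)$ and forcing $G$ are nonnegative, so $F\ge0$ by positivity of the Volterra resolvent on each $[0,T]$. (Equivalently, one may run the whole existence theory on the $F$-formulation \eqref{sir3}, where nonnegativity is immediate from the Neumann series.) Hence $W(t_2,\cdot)-W(t_1,\cdot)=\int_{t_1}^{t_2}F(s,\cdot)\,ds\ge0$ for $t_1<t_2$.

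Granting this, from \eqref{3.10} applied to the fixed point one has the uniform pointwise bound $0\le W(t,x)\le\alpha a(x)\bigl(\sup_{s\ge0}Q(s)+L\|c\|_Y\bigr)=:\bar c\,a(x)$ for all $t\ge0$, with $\bar c<\infty$ because $Q\in BC_+(\mathbb R_+)$. Combined with monotonicity, for a.e.\ $x$ the limit $W(\infty,x):=\lim_{t\to\infty}W(t,x)$ exists and satisfies $0\le W(\infty,x)\le\bar c\,a(x)$; since $a\in Y_+$ this gives $W(\infty,\cdot)\in Y_+$. Applying the monotone convergence theorem to $W(\infty,x)-W(t,x)\downarrow0$ yields $\|W(t,\cdot)-W(\infty,\cdot)\|_Y=\int_\Omega\bigl(W(\infty,x)-W(t,x)\bigr)N(x)\,dx\to0$.

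It remains to pass to the limit in \eqref{sir*}. Write the nonlinear term as $I(t,x)=\int_\Omega\int_0^\infty\mathbf{1}_{(0,t)}(\tau)\,A(\tau,x,\sigma)N(\sigma)\bigl(1-e^{-W(t-\tau,\sigma)}\bigr)\,d\tau\,d\sigma$. For a.e.\ fixed $(\tau,\sigma)$ the integrand converges, as $t\to\infty$, to $A(\tau,x,\sigma)N(\sigma)\bigl(1-e^{-W(\infty,\sigma)}\bigr)$ (using $\mathbf{1}_{(0,t)}(\tau)\to1$ and $W(t-\tau,\sigma)\to W(\infty,\sigma)$), and it is dominated by $A(\tau,x,\sigma)N(\sigma)\le\alpha a(x)b(\tau)c(\sigma)N(\sigma)$, which is $(\tau,\sigma)$-integrable with integral $\alpha La(x)\|c\|_Y<\infty$ for a.e.\ $x$. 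Dominated convergence then gives, for a.e.\ $x$, $I(t,x)\to\int_\Omega\Theta(x,\sigma)N(\sigma)\bigl(1-e^{-W(\infty,\sigma)}\bigr)\,d\sigma$ with $\Theta(x,\sigma)=\int_0^\infty A(\tau,x,\sigma)\,d\tau\le\alpha La(x)c(\sigma)<\infty$. Since $W(t,x)=H(t,x)+I(t,x)$, letting $t\to\infty$ shows in particular that $H(\infty,x):=\lim_{t\to\infty}H(t,x)$ exists (consistently with the monotone bound $H(t,x)\le\alpha a(x)\sup_sQ(s)$, so $H(\infty,\cdot)\in Y_+$), and that $W(\infty,x)=H(\infty,x)+\int_\Omega\Theta(x,\sigma)N(\sigma)\bigl(1-e^{-W(\infty,\sigma)}\bigr)\,d\sigma$, i.e.\ \eqref{sir**}.

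The main obstacle is thus the first step: turning the intuitively obvious statement that the cumulative force of infection only grows into a rigorous one, since it requires leaving the comfortable fixed-point setting of \eqref{sir*} — establishing enough time-regularity of $W$ to differentiate and invoke positivity of the Volterra resolvent, or else rebuilding the theory on the $F$-formulation. Once monotonicity is in place, everything else is a routine application of the monotone and dominated convergence theorems, with domination supplied throughout by the separable envelope $\alpha a(x)b(\tau)c(\sigma)$ of Assumption \ref{ass1}.
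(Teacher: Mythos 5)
Your proposal is correct and follows the same overall skeleton as the paper: monotonicity in $t$ plus the uniform bound from \eqref{3.10} give $W(\infty,\cdot)\in Y_+$ and $Y$-convergence via the monotone convergence (B.~Levi) theorem, after which one passes to the limit in \eqref{sir*}. Two implementation details differ, both in your favour or at least defensible. First, the paper simply asserts that $t\mapsto W(t,\cdot)N$ is non-decreasing; you correctly identify that this is not immediate from the fixed-point construction via $\mathcal F$ (the equation is non-autonomous) and supply a justification by returning to the $F$-formulation \eqref{sir3}, where $F\ge 0$ follows from positivity of the Volterra/Neumann iteration and hence $W=\int_0^{\cdot}F$ is monotone. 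This fills a gap the paper leaves implicit (it also tacitly needs $G\ge 0$, i.e.\ $H$ non-decreasing, which is part of the modelling). Second, for the limit passage in the nonlinear term the paper splits the difference into $\int_0^t$ and $\int_t^\infty$ pieces and controls the first by the Schwarz-inequality device of Proposition \ref{prop1}, bounding $\int_\Omega c(\sigma)N(\sigma)\abs{e^{-W(t-\tau,\sigma)}-e^{-W(\infty,\sigma)}}\,d\sigma$ by $\bigl(2\|c^2N\|_X\,\|W(t-\tau,\cdot)-W(\infty,\cdot)\|_Y\bigr)^{1/2}$ and then applying dominated convergence in $\tau$; this route uses the hypothesis $c^2\in Y_+$ and yields an estimate of the form $\alpha a(x)\cdot o(1)$. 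You instead invoke dominated convergence directly on the product measure in $(\tau,\sigma)$, using the a.e.\ pointwise convergence $W(t-\tau,\sigma)\to W(\infty,\sigma)$ furnished by monotonicity and the envelope $\alpha a(x)b(\tau)c(\sigma)N(\sigma)$; this is simpler, needs only $c\in Y_+$, and delivers the limit equation pointwise a.e.\ in $x$, which is exactly what \eqref{sir**} asserts. Your observation that $H(\infty,x)$ exists (as the limit of a bounded non-decreasing function, or as the difference of two convergent quantities) is also a point the paper uses without comment. No gaps.
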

\begin{proof}
Since $W(t,\cdot)N \in X=L^1_+(\Omega)$ is monotonically non-decreasing with respect to $t$ and $\sup_{t>0}\|W(t,\cdot)\|_Y<\infty$, it follows from B. Levi's theorem that there exists a $M \in X$ such that $\lim_{t \to \infty}W(t,\cdot)N=M \in X$.  Then $\lim_{t \to \infty}\|W(t,\cdot)-W(\infty,\cdot)\|_Y=0$ where $W(\infty,\cdot):=M/N$.   Observe that
 \[
\begin{aligned}
&\abs[\Big]{ \int_\Omega d\sigma \int_{0}^{t}A(\tau,x,\sigma)N(\sigma)(1-e^{-W(t-\tau,\sigma)})d\tau-\int_\Omega d\sigma \int_{0}^{\infty}A(\tau,x,\sigma)N(\sigma)(1-e^{-W(\infty,\sigma)})d\tau
 } \cr
&\le  \alpha a(x)\int_{0}^{t}b(\tau) d\tau \int_\Omega c(\sigma)N(\sigma) \abs{ e^{-W(t-\tau,\sigma)}-e^{-W(\infty,\sigma)} } d\sigma \cr
&~~~+ \alpha a(x)\int_{t}^{\infty}b(\tau) d\tau \int_\Omega c(\sigma)N(\sigma)e^{-W(\infty,\sigma)}d\sigma,
\end{aligned}
\]
 where it is clear that the second term on the right-hand side goes to zero when $t \to \infty$.  For the first term, we observe that
\[
\begin{aligned}
&\int_{0}^{t}d\tau b(\tau)  \int_\Omega c(\sigma)N(\sigma) \abs{ e^{-W(t-\tau,\sigma)}-e^{-W(\infty,\sigma)}} d\sigma \cr
&\le \int_{0}^{t}b(\tau) d\tau \int_\Omega c(\sigma)N(\sigma) \abs{ W(t-\tau,\sigma)-W(\infty,\sigma)} d\sigma \cr
&\le \int_{0}^{t}d\tau b(\tau)\left(2\|c^2N\|_X\|W(t-\tau,\cdot)-W(\infty,\cdot)\|_Y \right)^\frac{1}{2},
\end{aligned}
\]
where we use the Schwarz inequality as in the proof of Proposition 1.  Since
\[\begin{aligned}
&b(\tau)\left(2\|c^2N\|_X\|W(t-\tau,\cdot)-W(\infty,\cdot)\|_Y\right)^{\frac{1}{2}} \cr
&~~\le b(\tau)\left(2\|c^2N\|_X\sup_{t \ge 0}\|W(t,\sigma)-W(\infty,\sigma)\|_Y \right)^{\frac{1}{2}} \in L^1(\mathbb R_+),
\end{aligned}
\]
then it follows from the dominated convergence theorem that 
\[\lim_{t \to \infty}\int_{0}^{t}d\tau b(\tau)\left(2\|c^2N\|_X\|W(t-\tau,\cdot)-W(\infty,\cdot)\|_Y \right)^\frac{1}{2}=0,
\]
Si if we let $t \to \infty$ in the equation \eqref{sir*}, we have \eqref{sir**}.

\end{proof}

\section{The basic reproduction number}

Let $B(t,x):=F(t,x)S(t,x)$ be the incidence at time $t>0$.  Then we can rewrite \eqref{sir3} as an initial value problem for $S$ and $B$:
\begin{equation}\label{sir4}
\begin{aligned}
&\frac{\partial S(t,x)}{\partial t}=-B(t,x),  \cr
&B(t,x)= S(t,x)\left[\int_\Omega \int_{0}^{t}A(\tau, x, \sigma) B(t-\tau, \sigma) d\tau d\sigma+G(t,x)\right]. 
\end{aligned}
\end{equation}
In the disease invasion phase, the incidence is described by the linearized equation as
\begin{equation}\label{LRE}
B(t,x)= N(x)\int_\Omega \int_{0}^{t}A(\tau, x, \sigma) B(t-\tau, \sigma) d\tau d\sigma+N(x)G(t,x).
\end{equation}
where $B$ denotes the incidence rate in the invasion phase for the totally susceptible population.
Define the net reproduction operator acting on the birth state space $X=L^1_+(\Omega)$ as
\begin{equation}
(K(\tau)\phi)(x):=N(x)\int_\Omega A(\tau, x, \sigma)\phi(\sigma)d\sigma, ~\phi \in X.
\end{equation}
Then \eqref{LRE} is formulated as an abstract renewal equation
\begin{equation}\label{aRE}
B(t)=NG(t)+\int_{0}^{t}K(\tau)B(t-\tau)d\tau,
\end{equation}
where $B(t)=B(t,\cdot)$ and $NG(t)=N(\cdot)G(t,\cdot)$ are vector-valued functions from $\mathbb R_+$ to $X_+$. 
 
The next generation operator (NGO)\footnote{The NGO was first introduced in \cite{Diekmann1990},  see also \cite{Inaba2017}.} acting on $X$ associated with the renewal process \eqref{aRE} is defined by
\begin{equation}
T:=\int_{0}^{\infty}K(\tau)d\tau,
\end{equation}
which is the linear integral operator defined by
\begin{equation}
(T\phi)(x)=N(x)\int_{\Omega} \Theta(x,\sigma)\phi(\sigma)d\sigma, ~\phi \in X.
\end{equation}

Now we introduce the additional condition:

\begin{assumption}\label{ass2}
\begin{enumerate}
\item
For any $f \in L^\infty_+ \setminus \{0\}$, it follows that
\begin{equation}
\langle f, aN \rangle=\int_{\Omega}f(x)a(x)N(x)dx >0.
\end{equation}
\item
 $c \in L^\infty_+(\Omega)$ and 
\begin{equation}
\langle c, \phi \rangle=\int_{\Omega}c(x)\phi(x)dx >0,
\end{equation}
 for all $\phi \in X_+\setminus\{0\}$.  
 \item
 The following holds uniformly for $\sigma \in \Omega$,
\begin{equation}\label{FK}
\lim_{h \to 0}\int_{\Omega} \abs{ N(x+h)\Theta(x+h,\sigma)-N(x)\Theta(x,\sigma) } dx=0,
\end{equation}
where we assume that $\Theta(x,\sigma)=0$ for $x \notin \Omega$.
\end{enumerate}
\end{assumption}

Under the assumptions \ref{ass1}-\ref{ass2}, the next generation operator $T$ is a positive, bounded linear operator, and we can show that

\begin{proposition}\label{prop5}
Under the assumptions \ref{ass1}-\ref{ass2}, $T$ is compact and nonsupporting, so $r(T)$ is the dominant eigenvalue of $T$.
\end{proposition}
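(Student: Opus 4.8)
The plan is to base everything on the two–sided estimate that Assumption~\ref{ass1} forces on the kernel $\Theta$: integrating \eqref{ass11} over $\tau$ and writing $L:=\int_{0}^{\infty}b(\tau)\,d\tau$ one obtains
\[
L\,a(x)c(\sigma)\le\Theta(x,\sigma)\le\alpha L\,a(x)c(\sigma),\qquad x,\sigma\in\Omega .
\]
Compactness of $T$ will come from the upper bound together with condition \eqref{FK}, and the nonsupporting property from the lower bound together with parts (1)--(2) of Assumption~\ref{ass2}; recall that $T$ is already known to be a bounded positive operator on $X=L^1(\Omega)$.

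For compactness I would verify the Fr\'echet--Kolmogorov criterion (Theorem~B.2 in \cite{Smith2011}) for the image $T\mathcal B$ of the closed unit ball $\mathcal B\subset X$, just as in the proof of Proposition~\ref{prop2}. The upper bound gives $\abs{(T\phi)(x)}\le\alpha L\,\|c\|_{\infty}\,N(x)a(x)\,\|\phi\|_X$ (using $c\in L^{\infty}_+$ and $aN\in X$, which holds because $a\in Y_+$), so $T\mathcal B$ is bounded, and since $aN\in L^1(\Omega)$ no mass escapes at infinity: $\int_{\abs{x}>r}\abs{(T\phi)(x)}\,dx\le\alpha L\|c\|_{\infty}\int_{\abs{x}>r}N(x)a(x)\,dx\to0$ as $r\to\infty$, uniformly in $\phi\in\mathcal B$. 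For mean continuity, setting $\omega(h):=\sup_{\sigma\in\Omega}\int_{\Omega}\abs{N(x+h)\Theta(x+h,\sigma)-N(x)\Theta(x,\sigma)}\,dx$ (with the convention $\Theta(\cdot,\sigma)=0$ off $\Omega$, which supplies the extension by zero the criterion needs), Tonelli's theorem yields
\[
\int_{\Omega}\abs{(T\phi)(x+h)-(T\phi)(x)}\,dx\le\|\phi\|_X\,\omega(h),
\]
and $\omega(h)\to0$ as $h\to0$ by \eqref{FK}. Hence $T\mathcal B$ is relatively compact and $T$ is compact.

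For the nonsupporting property I would use the lower bound: for every $\phi\in X_+\setminus\{0\}$,
\[
(T\phi)(x)\ge L\,\langle c,\phi\rangle\,N(x)a(x)=L\,\langle c,\phi\rangle\,(aN)(x),
\]
with $\langle c,\phi\rangle>0$ by part (2) of Assumption~\ref{ass2} and $aN\in X_+\setminus\{0\}$ by part (1). Putting $d:=L\,\langle c,aN\rangle>0$, an easy induction gives $T^{n}(aN)\ge d^{\,n}\,aN$, hence $T^{n}\phi\ge L\,\langle c,\phi\rangle\,d^{\,n-1}\,aN$ for all $n\ge1$; in particular $\|T^{n}\|\ge d^{\,n}$, so $r(T)\ge d>0$. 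Then for any $f\in X^{*}_+\setminus\{0\}=L^{\infty}_+(\Omega)\setminus\{0\}$,
\[
\langle f,T^{n}\phi\rangle\ge L\,\langle c,\phi\rangle\,d^{\,n-1}\,\langle f,aN\rangle>0\qquad(n\ge1)
\]
by part (1) of Assumption~\ref{ass2}, which is exactly the definition of a nonsupporting operator. Since $T$ is then a positive, compact, nonsupporting operator on the Banach lattice $X=L^1(\Omega)$ with $r(T)>0$, the Krein--Rutman--type theorem for nonsupporting operators (Sawashima--Marek; see also \cite{Inaba2017}) gives that $r(T)$ is an algebraically simple eigenvalue with a strictly positive eigenvector and eigenfunctional, that it is the only eigenvalue possessing a positive eigenvector, and that every other spectral value has strictly smaller modulus — i.e.\ $r(T)$ is the dominant eigenvalue.

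I expect the compactness step to be the main obstacle: one must ensure that the two Fr\'echet--Kolmogorov conditions hold uniformly over the unit ball and that the behaviour at the (possibly unbounded, irregular) boundary of $\Omega$ is properly handled, which is exactly why the tail bound from $aN\in L^1(\Omega)$ and the translation bound \eqref{FK} are needed — and why one cannot simply invoke domination by the rank–one operator $\phi\mapsto\langle c,\phi\rangle\,aN$, since domination by a compact operator does not yield compactness on $L^1$. The nonsupporting step and the concluding spectral statement are then routine.
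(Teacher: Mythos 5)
Your proposal is correct and follows essentially the same route as the paper: the two-sided separable bound on $\Theta$ yields the lower bound $T\phi \ge L\langle c,\phi\rangle\, aN$, from which the nonsupporting property (and $r(T)>0$) follows by iteration, and compactness is checked via the Fr\'echet--Kolmogorov criterion using $aN\in L^1(\Omega)$ for the tail conditions and \eqref{FK} for translation continuity. The only differences are cosmetic: you verify the nonsupporting definition directly by pairing with positive functionals where the paper invokes quasi-interiority of $T^n\phi$, and you explicitly record $r(T)\ge L\langle c,aN\rangle>0$, a point the paper leaves implicit.
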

\begin{proof}
From 
\begin{equation}
   L a(x)N(x)\langle c,\phi \rangle \le  (T\phi)(x) \le \alpha La(x)N(x)\langle c,\phi \rangle.
\end{equation}
 We know that $T$ is a bounded linear operator on $X$ and it follows that for $\phi \in X_+$,
and for every integer $n \ge 1$, it holds that
 \begin{equation}
 (T^n\phi)(x) \ge L^n \langle c, \phi \rangle \langle c, Na\rangle^{n-1}N(x)a(x).
 \end{equation}
 By the assumption \ref{ass2},  $T^n\phi$ is a quasi-interior point in the cone $X_+$.  Then $T$ is a strictly nonsupporting operator (Definition 10.3, \cite{Inaba2017}).  It follows from the positive operator theory that the spectral radius $r(T)$ is the dominant eigenvalue of $T$.  
Next we show the compactness of $T$ for the case that $\Omega=[0,\infty)$.  Let $\phi \in U:=\{ \phi \in X: \|\phi\|_X \le M\}$.  Then we have $\|T\phi \|_X \le \alpha L\|c\|_\infty \|a\|_Y M$, where $\|c\|_\infty:=\sup_{x \in \Omega}\abs{c(x)}$.  Next we can see that
\[ \lim_{r \to \infty}\int_{r}^\infty \abs{ (T\phi)(x) } dx \le \lim_{r \to \infty}\alpha LM \|c\|_\infty \int_{r}^{\infty}N(x)a(x)dx = 0,\]
\[\lim_{h \to +0}\int_{0}^{h} \abs{ (T\phi)(x) } dx \le \lim_{h \to +0}\alpha M L\|c\|_\infty \int_{0}^{h}N(x)a(x)dx = 0,\]
where the convergence is uniform in $\phi \in U$. Finally it follows from the condition \eqref{FK} that
\[\lim_{h \to +0}\int_{0}^{\infty} \abs{ (T\phi)(x+h)-(T\phi)(x) } dx =0,\]
uniformly in $\phi \in U$.
Then we can use the Fr\'echet-Kolomogorov criterion (see \cite{Smith2011}, Theorem B.2.) to conclude that $T(U)$ has compact closure.
\end{proof}

Under the assumptions \ref{ass1}-\ref{ass2}, the basic reproduction number $R_0$ is defined by its spectral radius $r(T)$.  In fact, it follows from the well-known Renewal Theorem that  $B(t)$ is asymptotically proportional to $e^{r_0 t}\phi(x)$, where the growth rate $r_0 \in \mathbb R$ and the density $\phi \in X_+$ satisfy the eigenvalue problem $\phi=\hat{K}(r_0) \phi$,
where
 $\hat{K}(\lambda)$, $\lambda \in \mathbb C$ is the Laplace transform of $K$ defined by
$\hat{K}(\lambda):=\int_{0}^{\infty}e^{-\lambda \tau}K(\tau)d\tau$.
Then the spectral radius $r(\hat{K}(\lambda))$, $\lambda \in \mathbb R$ is the positive eigenvalue of $\hat{K}(\lambda)$. 
Then the intrinsic growth rate (asymptotic Malthusian parameter for $B$) $r_0$, is given as the real root such that $r(\hat{K}(r_0))=1$ and $\phi$ is the positive eigenvector of $\hat{K}(r_0)$ associated with its eigenvalue unity.
Since the spectral radius $r(\hat{K}(\lambda))$, $\lambda \in \mathbb R$ is monotonically decreasing with respect to $\lambda$, so the sign relation ${\rm sign}(r_0)={\rm sign}(R_0-1)$ holds,
where $R_0=r(\hat{K}(0))=r(T)$. (\cite{Inaba2017}, chapter 10).

\section{Pandemic Threshold Theorem}

As is mentioned in Section 3, the generalized pandemic threshold theorem has been proved by Thieme \cite{Thieme1977a}, Diekmann \cite{Diekmann1978}, and Inaba \cite{Inaba2014} under different formulations, respectively.  Here we give a simple proof under the new assumption. 

From \eqref{3.3}, $\lim_{t \to \infty}S(t,x)$ exists and it holds that
\begin{equation}
W(\infty,x)=-\log \left(\frac{S_\infty(x)}{N(x)} \right).
\end{equation}
Let $t \to \infty$ in \eqref{sir*}, we have
\begin{equation}\label{20}
W(\infty,x)=H(\infty,x)+\Psi(W(\infty,\cdot))(x).
\end{equation}
where $\Psi$ is 
 a nonlinear operator from $Y_+$ to $Y_+$ defined by
\begin{equation}
\begin{aligned}
(\Psi \psi)(x)=\int_{\Omega}\Theta(x,\sigma)N(\sigma)(1-e^{-\psi(\sigma)})d\sigma, \quad \psi \in Y_+
\end{aligned}
\end{equation}
where $H(\infty,\cdot) \in Y_+$.  
To estimate $W(\infty,\cdot)$, the fixed point of $\Psi$ plays a key role.  

\begin{proposition}\label{prop6}
If $R_0 \le 1$, $\Psi$ has no positive fixed point.
\end{proposition}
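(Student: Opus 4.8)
The plan is to argue by contradiction, transporting the fixed-point relation to the birth space $X=L^1(\Omega)$ and exploiting the Perron--Frobenius structure of the next generation operator $T$ established in Proposition~\ref{prop5}. So suppose that $R_0\le 1$ and that $\psi\in Y_+\setminus\{0\}$ satisfies $\psi=\Psi\psi$; I aim to derive a contradiction. The whole argument is powered by the elementary convexity inequality $0\le 1-e^{-s}\le s$ for $s\ge 0$, with equality only at $s=0$.

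First I would rewrite $\psi=\Psi\psi$ as a \emph{linear} identity on $X$. Put $\phi(\sigma):=N(\sigma)(1-e^{-\psi(\sigma)})$. Then $0\le\phi\le N\psi$ pointwise, so $\phi\in X_+$; moreover $\phi\ne 0$, because $\psi\ne 0$ and $N>0$ a.e.\ on $\Omega$ (this positivity of $N$ is implicit in the definition of $Y$ and of the isometry $P$). Multiplying $\psi=\Psi\psi$ by $N$ and recalling $\Theta(x,\sigma)=\int_0^\infty A(\tau,x,\sigma)\,d\tau$, one gets precisely $N\psi=T\phi$, where $(T\chi)(x)=N(x)\int_\Omega\Theta(x,\sigma)\chi(\sigma)\,d\sigma$ is the next generation operator.

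Next I would invoke the spectral structure from Proposition~\ref{prop5}: $T$ is compact and nonsupporting, so $R_0=r(T)>0$ is its dominant eigenvalue, and positive operator theory (\cite{Inaba2017}, Chapter~10) supplies a strictly positive eigenfunctional $F^\ast\in X_+^\ast$ of the adjoint, with $T^\ast F^\ast=R_0F^\ast$ and $\langle F^\ast,\chi\rangle>0$ for every $\chi\in X_+\setminus\{0\}$. Pairing $N\psi=T\phi$ with $F^\ast$ gives
\[
\langle F^\ast,N\psi\rangle=\langle F^\ast,T\phi\rangle=\langle T^\ast F^\ast,\phi\rangle=R_0\langle F^\ast,\phi\rangle .
\]
Since $\phi\le N\psi$ and $R_0\le 1$, this chain becomes
\[
\langle F^\ast,N\psi\rangle=R_0\langle F^\ast,\phi\rangle\le\langle F^\ast,\phi\rangle\le\langle F^\ast,N\psi\rangle ,
\]
forcing equality throughout, and in particular $\langle F^\ast,N\psi-\phi\rangle=0$. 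But $N\psi-\phi=N(\psi-1+e^{-\psi})$ is a \emph{nonzero} element of $X_+$ --- it is strictly positive on the positive-measure set where $\psi>0$ --- so strict positivity of $F^\ast$ forces $\langle F^\ast,N\psi-\phi\rangle>0$, a contradiction. Hence $\Psi$ has no positive fixed point when $R_0\le 1$.

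The only genuinely nontrivial ingredient is the strictly positive adjoint eigenfunctional $F^\ast$, together with the a.e.\ positivity of $N$; the former is exactly what the nonsupporting property of Proposition~\ref{prop5} yields via the Sawashima/Krein--Rutman machinery cited there, and the latter is built into the Banach space $Y$. Everything else --- checking $\phi\in X_+$, the algebraic identity $N\psi=T\phi$, and the inequality $1-e^{-s}\le s$ --- is routine bookkeeping. For the subcase $R_0<1$ alone one could instead use the monotone iteration $\psi\le\tilde T\psi\le\tilde T^2\psi\le\cdots$ with $\|\tilde T^n\|\to 0$, where $\tilde T$ is the isometric copy of $T$ on $Y$ determined by $P\tilde T=TP$; but since the argument above also covers the borderline case $R_0=1$, I would present it in the unified form.
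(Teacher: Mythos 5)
Your proposal is correct and follows essentially the same route as the paper: both arguments transport the fixed point to $X$ via $N$, use the inequality $1-e^{-s}\le s$ to compare with the next generation operator $T$, and pair with the strictly positive adjoint eigenfunctional of $T$ to force a contradiction when $R_0\le 1$. Your version merely makes explicit two points the paper leaves implicit (the exact identity $N\psi=T\phi$ before passing to the inequality, and the strictness of $\psi-1+e^{-\psi}>0$ on the set where $\psi>0$), which is a welcome clarification but not a different proof.
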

\begin{proof}
Let $\psi \in Y_+$ be a positive fixed point in $Y_+$.  Then we have $N \psi \in X_+$ and
\begin{equation}\label{47}
\begin{aligned}
N(x)\psi(x) &=N(x)\int_{\Omega}\Theta(x,\sigma)N(\sigma)(1-e^{-\psi(\sigma)})d\sigma \cr
& \le N(x)\int_{\Omega}\Theta(x,\sigma)N(\sigma)\psi(\sigma)d\sigma=(T N\psi)(x),
\end{aligned}
\end{equation}
Let $f^* \in X_+^*$ be the adjoint eigenfunctional of $T$ associated with the eigenvalue $r(T)=R_0$.
Taking the duality pairing in \eqref{47}, we have
\[ \langle f^*, N\psi \rangle \le \langle f^*, T N\psi \rangle=R_0 \langle f^*, N\psi \rangle,\]
where $\langle f^*,\phi \rangle$ denotes the value of $f^*$ at $\phi \in X$, and the equality holds only if $\psi=0$.  Then we have $R_0>1$.  So if $R_0 \le 1$, there is no positive fixed point.
\end{proof}

Although we skip the proof, it is intuitively clear that $W(\infty,\cdot)$ is monotonically non-decreasing with respect to $H(\infty,\cdot)$.  Then Proposition \ref{prop6} implies that if $R_0 \le 1$, $W(\infty,\cdot) \to 0$ when $H(\infty,\cdot) \to 0$.

Let $P: Y \to X$ be the isomorphism between $X$ and $Y$ such that $P\psi=N\psi$.
Define an operator $\Phi$ from $X$ into itself by $\Phi=P\Psi P^{-1}$.  Then we have
\begin{equation}
\Phi(\phi)(x)=N(x)\int_{\Omega}\Theta(x,\sigma)N(\sigma)(1-e^{-N^{-1}(\sigma)\phi(\sigma)})d\sigma, \quad \phi \in X_+.
\end{equation}

\begin{proposition}\label{prop7}
Under the assumption \ref{ass1}-\ref{ass2},  $\Phi$ is a compact and concave operator\footnote{The definition of the concave operator is given in Krasnoselskii \cite{Krasnoselskii1964}. The reader may consult \cite{Inaba1990} and \cite{Inaba2017}.} in $X$.
\end{proposition}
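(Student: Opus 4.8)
The plan is to recognize $\Phi$ as the composition of the next generation operator $T$ with a globally Lipschitz superposition operator, which immediately yields compactness, and then to read off concavity from the same scalar inequalities that were exploited in the proof of Proposition~\ref{prop3}.

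First I would introduce the superposition operator $\rho$ on $X$ by $(\rho\phi)(\sigma):=N(\sigma)(1-e^{-\phi(\sigma)/N(\sigma)})$, so that $\Phi(\phi)=T(\rho(\phi))$, since $(T\psi)(x)=N(x)\int_\Omega\Theta(x,\sigma)\psi(\sigma)d\sigma$. Because $u\mapsto N(1-e^{-u/N})$ has derivative $e^{-u/N}\in(0,1]$, the map $\rho$ is $1$-Lipschitz on $X$ (hence continuous), it sends bounded sets to bounded sets since $\|\rho(\phi)\|_X\le\|\phi\|_X$, and it leaves $X_+$ invariant with $0\le\rho(\phi)\le\phi$ for $\phi\in X_+$; note also $\Phi(0)=0$. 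By Proposition~\ref{prop5} the operator $T$ is compact, so $\Phi=T\circ\rho$ maps bounded subsets of $X$ (in particular of $X_+$) onto relatively compact sets; this is the compactness half.

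For concavity in the sense of Krasnoselskii I would take the comparison element $u_0:=aN$, which belongs to $X_+$ (because $a\in Y_+$) and is quasi-interior in $L^1_+(\Omega)$ since $aN>0$ a.e.\ by Assumption~\ref{ass2}(1). Monotonicity of $\Phi$ on $X_+$ is clear because $u\mapsto1-e^{-u}$ is increasing and $\Theta\ge0$. Integrating the two-sided bound $La(x)c(\sigma)\le\Theta(x,\sigma)\le\alpha La(x)c(\sigma)$ — which follows from \eqref{ass11} with $L:=\int_0^\infty b(\tau)d\tau$ — against $\rho(\phi)\ge0$ gives
\[
L\langle c,\rho(\phi)\rangle\,u_0(x)\le\Phi(\phi)(x)\le\alpha L\langle c,\rho(\phi)\rangle\,u_0(x),
\]
and $0<\langle c,\rho(\phi)\rangle\le\langle c,\phi\rangle<\infty$ for every $\phi\in X_+\setminus\{0\}$ by Assumption~\ref{ass2}(2) together with $\rho(\phi)\le\phi$ and $c\in L^\infty_+$. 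Hence $\Phi$ carries $X_+\setminus\{0\}$ into the set of elements comparable with $u_0$. Finally, exactly as in the proof of Proposition~\ref{prop3}, for $s\in(0,1)$ and $u>0$ one has $1-e^{-su}-s(1-e^{-u})>0$; setting $\zeta_s(\phi)(\sigma):=N(\sigma)\bigl[(1-e^{-s\phi(\sigma)/N(\sigma)})-s(1-e^{-\phi(\sigma)/N(\sigma)})\bigr]$, this element lies in $X_+\setminus\{0\}$ whenever $\phi\ne0$ and satisfies $0\le\zeta_s(\phi)\le\phi$, and
\[
\Phi(s\phi)(x)\ge s\Phi(\phi)(x)+L\langle c,\zeta_s(\phi)\rangle\,u_0(x)\ge\bigl(1+\eta(\phi,s)\bigr)s\Phi(\phi)(x),
\]
where $\eta(\phi,s):=\langle c,\zeta_s(\phi)\rangle/\bigl(s\alpha\langle c,\rho(\phi)\rangle\bigr)>0$ and the last step uses the upper bound on $\Phi(\phi)$ displayed above. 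Combined with monotonicity and $u_0$-comparability, this is precisely the definition of a concave operator.

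I expect the substance of the argument to be merely bookkeeping: one must verify that the pairings $\langle c,\rho(\phi)\rangle$ and $\langle c,\zeta_s(\phi)\rangle$ are both strictly positive and finite — this is where Assumption~\ref{ass2}(2), the membership $c\in L^\infty_+$, and the dominations $\rho(\phi),\zeta_s(\phi)\le\phi$ all enter — and that $u_0=aN$ is a genuine comparison element, which is guaranteed by Assumption~\ref{ass2}(1). The compactness half is essentially free once $\Phi$ is seen as the already-compact $T$ post-composed with the $1$-Lipschitz map $\rho$, so no new Fr\'echet--Kolmogorov estimate beyond the one in Proposition~\ref{prop5} is needed.
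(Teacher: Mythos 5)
Your proof is correct, and the concavity half is essentially the paper's argument: monotonicity, the subhomogeneity inequality coming from $1-e^{-su}\ge s(1-e^{-u})$, and $u_0$-positivity with the comparison element $u_0=aN$ and the functional $\epsilon(\phi)=\langle c,\rho(\phi)\rangle$ are exactly what the paper verifies (your extra quantitative bound $\Phi(s\phi)\ge(1+\eta(\phi,s))s\Phi(\phi)$ is not needed for concavity itself; in the paper it appears only later, in the uniqueness argument of Proposition~\ref{prop8}). Where you genuinely diverge is the compactness half: you factor $\Phi=T\circ\rho$ through the superposition operator $\rho(\phi)=N(1-e^{-\phi/N})$, observe that $\rho$ is continuous and maps bounded sets (indeed all of $X_+$, since $0\le\rho(\phi)\le N$) into bounded sets, and invoke the compactness of $T$ from Proposition~\ref{prop5}; the paper instead re-runs the Fr\'echet--Kolmogorov verification directly on $\Phi(X_+)$, using condition \eqref{FK} of Assumption~\ref{ass2} a second time. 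Your route is more economical and makes the structural reason for compactness transparent (a continuous bounded map followed by a compact linear one), at the price of leaning on Proposition~\ref{prop5}; the paper's route is self-contained and states explicitly that all of $\Phi(X_+)$, not merely bounded subsets, has compact closure --- a fact your decomposition also delivers once you note $\rho(X_+)\subset\{\psi:0\le\psi\le N\}$, which is bounded in $X$, so it would be worth recording that domination explicitly. Two cosmetic cautions: the $1$-Lipschitz claim for $\rho$ holds on $X_+$ (the derivative $e^{-u/N}$ exceeds $1$ for $u<0$), which is all you use; and quasi-interiority of $aN$ should be cited directly as Assumption~\ref{ass2}(1) (which is the duality characterization of quasi-interior points in $L^1$) rather than as ``$aN>0$ a.e.''.
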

\begin{proof}
First we show that $\Phi$ is a compact operator.  For simplicity, we consider the case that $\Omega=[0,\infty)$. As is seen in Proposition \ref{prop5}, it is sufficient to check the conditions for the compactness of $\Phi(U) \subset X=L^1(\Omega)$ for any bounded set $U \subset X_+$ (Theorem B.2. in \cite{Smith2011}).  
In fact, it is easy to see $\|\Phi(\phi) \|_X \le \alpha L \|Na\|_X \langle c, N \rangle$ for any $\phi \in X_+$. Then the condition $\sup_{\phi \in X_+}\|\Phi(\phi) \|_X <\infty$ holds.  Next it holds that
\[ \lim_{r \to \infty}\int_{r}^\infty \abs{ \Phi(\phi)(x) } dx \le \lim_{r \to \infty}\alpha L \langle c, N \rangle \int_{r}^{\infty}N(x)a(x)dx=0,\]
\[ \lim_{h \to +0}\int_{0}^{h} \abs{ \Phi(\phi)(x) } dx \le \lim_{r \to \infty}\alpha L \langle c, N \rangle\int_{0}^{h}N(x)a(x)dx=0.\]
Finally let
\[J:=\int_{0}^{\infty} \abs{ \Phi(\phi)(x+h)-\Phi(\phi)(x) } dx.\]
Then we can observe that for $\phi \in X_+$,
\begin{equation}\label{5.6}
\int_\Omega \abs{ \Phi(\phi)(x+h)-\Phi(\phi)(x)} dx \le \int_{\Omega} \int_{\Omega} \abs{ N(x+h)\Theta(x+h,\sigma)-N(x)\Theta(x,\sigma) }  dx N(\sigma)d\sigma. 
\end{equation}
where it follows from the assumption \ref{ass2} that the right-hand side of \eqref{5.6} goes to zero uniformly for $\phi \in X_+$ when $h \to 0$.  Then we conclude that $\Phi(X_+)$ has compact closure. 
Next, we prove the concavity of $\Phi$. 
First it is clear that $\Phi$ is monotone non-decreasing.
Next, from the fact that $1-e^{-tx} \ge t(1-e^{-x})$ for $0\le t \le 1$ and $x \ge 0$, it follows that for $0 \le t \le 1$ and $\phi \in X_+$,
\begin{equation}\label{cond1}
\Phi(t\phi) \ge t\Phi(\phi), \quad 0 \le t \le 1,
\end{equation}
Now we can observe that $\Phi$ is {\it $u_0$-positive}, that is, there exists a nonzero element $u_0 \in X_+$ such that for a nonzero $\phi \in X_+$ the inequality
\begin{equation}\label{cond2}
\epsilon(\phi) u_0 \le \Phi(\phi) \le \alpha \epsilon(\phi) u_0,
\end{equation}
holds, where $\epsilon$ denotes a positive functional on $X_+$.
In fact, under the assumption \ref{ass1}, it holds that for $\phi \in X_+$,
\begin{equation}
La(x) N(x)\epsilon(\phi)  \le \Phi(\phi)(x) \le \alpha L a(x)N(x)\epsilon(\phi),
\end{equation}
where $\epsilon$ is a strictly positive functional on $X_+$ given by 
\begin{equation}
\epsilon(\phi):=\int_{\Omega}c(\sigma)N(\sigma)(1-e^{-N^{-1}(\sigma) \phi(\sigma)})d\sigma.
\end{equation}
Therefore if we choose $u_0(x)=L a(x)N(x)$, we obtain \eqref{cond2}, and \eqref{cond1} and \eqref{cond2} are sufficient conditions for concavity of $\Phi$.
\end{proof}

\begin{proposition}\label{prop8}
$\Phi$ has at most one positive fixed point.
\end{proposition}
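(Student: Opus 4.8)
The plan is to repeat, in the present $X$-setting, the comparison argument used for Proposition~\ref{prop3}, now drawing on the properties of $\Phi$ collected in Proposition~\ref{prop7}: monotonicity, $u_0$-positivity \eqref{cond2} with $u_0(x)=La(x)N(x)$, and concavity. The first step is to upgrade the superadditivity \eqref{cond1} to a strict form: since $1-e^{-t\xi}>t(1-e^{-\xi})$ whenever $0<t<1$ and $\xi>0$, one obtains, for every $\phi\in X_+\setminus\{0\}$ and $t\in(0,1)$,
\[
\Phi(t\phi)(x)\ge t\,\Phi(\phi)(x)+\delta(\phi,t)\,u_0(x),\qquad
\delta(\phi,t):=\int_\Omega c(\sigma)N(\sigma)\bigl(1-e^{-tN^{-1}(\sigma)\phi(\sigma)}-t(1-e^{-N^{-1}(\sigma)\phi(\sigma)})\bigr)\,d\sigma ,
\]
and $\delta(\phi,t)>0$ because the functional $\epsilon$ of Proposition~\ref{prop7} is strictly positive (Assumption~\ref{ass2}(2)), exactly as $\eta(\phi,s)>0$ in the proof of Proposition~\ref{prop3}.

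Next, suppose $\Phi$ has two positive fixed points $\phi_1,\phi_2\in X_+\setminus\{0\}$. Evaluating \eqref{cond2} at each $\phi_i$ and using $\phi_i=\Phi(\phi_i)$ gives $\epsilon(\phi_i)u_0\le\phi_i\le\alpha\,\epsilon(\phi_i)u_0$ with $\epsilon(\phi_i)>0$; combining the lower bound for $\phi_1$ with the upper bound for $\phi_2$ shows $\phi_1\ge\frac{\epsilon(\phi_1)}{\alpha\,\epsilon(\phi_2)}\phi_2$, so $k:=\sup\{\mu\ge0:\phi_1\ge\mu\phi_2\}$ is well defined and strictly positive. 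Assume for contradiction that $0<k<1$. Then monotonicity of $\Phi$, the fixed-point property, the strict inequality above with $\phi=\phi_2$ and $t=k$, and the bound $u_0\ge\frac{1}{\alpha\,\epsilon(\phi_2)}\phi_2$ give
\[
\phi_1=\Phi(\phi_1)\ge\Phi(k\phi_2)\ge k\,\Phi(\phi_2)+\delta(\phi_2,k)u_0=k\phi_2+\delta(\phi_2,k)u_0\ge\Bigl(k+\tfrac{\delta(\phi_2,k)}{\alpha\,\epsilon(\phi_2)}\Bigr)\phi_2 ,
\]
and since $\delta(\phi_2,k)>0$ this contradicts the maximality of $k$. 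Hence $k\ge1$, i.e.\ $\phi_1\ge\phi_2$; interchanging $\phi_1$ and $\phi_2$ yields $\phi_2\ge\phi_1$, so $\phi_1=\phi_2$, which is the assertion.

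As for difficulty, the computations are all routine: the pointwise inequality behind the strict superadditivity, the bookkeeping with the $u_0$-bounds, and the reduction of $\delta(\phi,t)>0$ to Assumption~\ref{ass2}(2). The only point that needs a little care — and the natural candidate for the main obstacle — is ensuring $\epsilon(\phi_2)>0$, so that dividing by it in the last display is legitimate and the supremum $k$ is usable; this is exactly where $\phi_2\neq0$ and the strict positivity of $\epsilon$ enter, and it is what makes the dichotomy ``$k<1$ or $k\ge1$'' close. This is the abstract-operator analogue of Krasnoselskii's uniqueness theorem for concave $u_0$-positive operators, cf.\ \cite{Krasnoselskii1964, Inaba2017}.
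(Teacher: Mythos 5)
Your argument is correct and is essentially the paper's: you derive the same strict concavity estimate $\Phi(k\phi)\ge k\Phi(\phi)+\eta(\phi,k)\,a N$ with $\eta(\phi,k)>0$, and where the paper then invokes Lemma~4.8 of \cite{Inaba1990}, you simply inline that lemma's proof via the comparison/supremum argument — the very same argument the paper uses for Proposition~\ref{prop3}. No gap; the one point you flag ($\epsilon(\phi_2)>0$, hence $k>0$ and finite) is handled correctly by Assumption~\ref{ass2}(2) and the $u_0$-bounds.
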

\begin{proof}
Observe that for $k \in (0,1)$,
\begin{equation}
\begin{aligned}
\Phi(k\phi)-k\Phi(\phi)&=
N(x)\int_{\Omega}\Theta(x,\sigma)N(\sigma)
(1-e^{-k N^{-1}(\sigma)}-k(1-e^{ -N^{-1}(\sigma)\phi(\sigma)}))d\sigma \cr
&\ge a(x)N(x) \eta(\phi;k),
\end{aligned}
\end{equation}
where
\[\eta(\phi;k):=L\int_{\Omega}c(\sigma)N(\sigma)
(1-e^{-k N^{-1}(\sigma)\phi(\sigma)}-k(1-e^{ -N^{-1}(\sigma)\phi(\sigma)}))d\sigma.\]
Then it is easy to see that $\eta(\phi,k)>0$ for any $\phi \in X_+\setminus\{0\}$ and $k \in (0,1)$.
Therefore, $\Phi$ is monotone and concave in $X_+$ and for any $k \in (0,1)$, it follows that 
\begin{equation}\label{epositive}
\Phi(k\phi) \ge k\Phi(\phi)+\eta(\phi,k) a(x)N(x).
\end{equation}
where $aN \in X_+$ is a quasi-interior point. Then we can apply Lemma 4.8 in \cite{Inaba1990}, $\Phi$ has at most one positive fixed point.
\end{proof}

\begin{proposition}\label{prop9}
If $R_0 > 1$, $\Psi$ has a unique positive fixed point.
\end{proposition}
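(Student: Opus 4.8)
The plan is to transport the problem to $X$ and run a monotone iteration. Since $P\colon Y\to X$, $P\psi=N\psi$, is a (positive) isomorphism and $\Phi=P\Psi P^{-1}$, a function $\phi$ is a positive fixed point of $\Phi$ in $X_+$ if and only if $\psi=P^{-1}\phi=\phi/N$ is a positive fixed point of $\Psi$ in $Y_+$; uniqueness of such a fixed point is already Proposition \ref{prop8}, so the only thing to prove is the \emph{existence} of one positive fixed point of $\Phi$ when $R_0>1$. By Proposition \ref{prop7}, $\Phi$ is compact and monotone non-decreasing, and the estimates in its proof yield the uniform bound $0\le\Phi(\phi)\le\alpha\langle c,N\rangle\,u_0$ for all $\phi\in X_+$, where $u_0=LaN\in X_+$. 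Consequently, as soon as I have a subsolution $\phi_0\in X_+\setminus\{0\}$ with $\Phi(\phi_0)\ge\phi_0$, the sequence $\{\Phi^{n}(\phi_0)\}$ is non-decreasing and dominated a.e.\ by $\alpha\langle c,N\rangle u_0\in L^1(\Omega)$; by monotone convergence it converges in $X$ (and a.e.) to some $\phi^{*}\ge\phi_0$, and passing to the limit under the integral defining $\Phi$ (dominated convergence, the kernel $\Theta(x,\cdot)N(\cdot)$ being integrable) gives $\Phi(\phi^{*})=\phi^{*}$. Together with Proposition \ref{prop8} this finishes the proof, so the whole argument reduces to producing the subsolution.

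For the subsolution I would use the dominant eigenvector of $T$. By Proposition \ref{prop5}, $T$ has a positive eigenvector $v_0=N\psi_0\in X_+\setminus\{0\}$ with $Tv_0=R_0v_0$, and the two-sided bound $La(x)N(x)\langle c,\phi\rangle\le(T\phi)(x)\le\alpha La(x)N(x)\langle c,\phi\rangle$ coming from \eqref{ass11} forces $c_1a\le\psi_0\le c_2a$ a.e.\ for constants $c_1,c_2>0$. I then test $\Phi$ on $\phi_0=sv_0$ for small $s>0$. The delicate point is that the susceptibility weight $a$, hence $\psi_0$, need not be bounded — this is the whole reason for the $L^1$-setting — so the naive linearization $\Phi(sv_0)\approx sTv_0=sR_0v_0$ cannot be made into a pointwise inequality. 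Instead I split $\Omega=\{\psi_0\le M\}\cup\{\psi_0>M\}$: on the first set the elementary inequality $1-e^{-u}\ge s\bigl(1-\tfrac{sM}{2}\bigr)\psi_0(\sigma)$ for $u=s\psi_0(\sigma)\le sM$ bounds $\Phi(sv_0)(x)$ from below by $s\bigl(1-\tfrac{sM}{2}\bigr)N(x)\int_{\{\psi_0\le M\}}\Theta(x,\sigma)N(\sigma)\psi_0(\sigma)\,d\sigma$; the complementary part of the integral is the tail $\int_{\{\psi_0>M\}}\Theta(x,\sigma)N(\sigma)\psi_0(\sigma)\,d\sigma\le\alpha La(x)\,\delta(M)$ with $\delta(M):=\int_{\{\psi_0>M\}}c(\sigma)v_0(\sigma)\,d\sigma\to0$ as $M\to\infty$ (dominated convergence, using $c\in L^\infty$ and $v_0\in L^1$). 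Using $\int_\Omega\Theta(x,\sigma)N(\sigma)\psi_0(\sigma)\,d\sigma=R_0\psi_0(x)$ and $a\le\psi_0/c_1$ to absorb the tail, one arrives at $\Phi(sv_0)(x)\ge s\bigl(1-\tfrac{sM}{2}\bigr)\bigl(R_0-\tfrac{\alpha L}{c_1}\delta(M)\bigr)v_0(x)$ a.e. Choosing $M$ so large that $\tfrac{\alpha L}{c_1}\delta(M)\le\tfrac{R_0-1}{2}$ and then $s$ so small that $\bigl(1-\tfrac{sM}{2}\bigr)\tfrac{R_0+1}{2}\ge1$ — both possible precisely because $R_0>1$ — yields $\Phi(sv_0)\ge sv_0$, the desired subsolution.

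The main obstacle is exactly this last step. Compactness, monotonicity, the uniform a priori bound and the passage to the limit are all routine or already in Propositions \ref{prop5} and \ref{prop7}, and uniqueness is Proposition \ref{prop8}; the work lies in converting the spectral hypothesis $R_0>1$ into a pointwise subsolution inequality. Because the per-type susceptibility $a$ may be unbounded, $\Phi$ is not Fr\'echet differentiable at the origin in $L^1$, so one cannot simply read the subsolution off a derivative; the truncation of $\psi_0$ at level $M$ together with the vanishing of the weighted tail $\delta(M)$ is what makes the conversion go through, and I expect it to be the part demanding the most care.
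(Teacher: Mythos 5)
Your argument is correct, but it reaches the fixed point by a genuinely different route than the paper. The paper's proof is short and abstract: it notes that $\Phi$ maps $X_+$ into a bounded set, asserts that the Fr\'echet derivative of $\Phi$ at the origin is the next generation operator $T$, and then invokes Krasnoselskii's fixed point theorem for monotone concave $u_0$-positive operators (using the concavity and compactness established in Proposition \ref{prop7}) to conclude existence from $r(\Phi'[0])=R_0>1$; uniqueness is Proposition \ref{prop8}, exactly as in your reduction. You instead bypass the abstract theorem: you build an explicit subsolution $sv_0$ from the dominant eigenvector $v_0$ of $T$ (Proposition \ref{prop5}), handling the unboundedness of $\psi_0=v_0/N$ by truncating at level $M$, controlling the weighted tail $\delta(M)\to0$, and then running a monotone iteration dominated by $\alpha L\langle c,N\rangle aN\in L^1$. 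This is longer but more self-contained, and it buys something real: your observation that $\Phi$ need not be Fr\'echet differentiable at $0$ in the $L^1$ norm when $a$ is unbounded is well taken (the remainder $1-e^{-u}-u$ is only quadratic for small $u$, and $\phi/N$ can be large on small sets even when $\|\phi\|_X$ is small), so the paper's appeal to $\Phi'[0]=T$ really only holds in the weaker sense of a derivative along rays in the cone, which is what the concave-operator versions of Krasnoselskii's theorem actually require; your truncation argument sidesteps this delicacy entirely. Conversely, the paper's route is shorter and reuses the $u_0$-positivity and concavity already set up for Proposition \ref{prop8}. One small point to make explicit in your write-up: the limit $\phi^*$ of the monotone iteration is nonzero because $\phi^*\ge sv_0\neq0$, which is what makes it a \emph{positive} fixed point.
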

\begin{proof}
It is easy to see that $\Phi$ maps the positive cone $X_+$ into the bounded convex subset $\{\phi \in X_+: \|\phi\|_X \le \alpha L \|Na\|_X\}$, and its Fr\'echet derivative at the origin, denoted by $\Phi'[0]$, is the next generation operator $T$.  
Then it follows from Krasnoselskii's theorem (Theorem 4.11 in \cite{Krasnoselskii1964}, \cite{Inaba1990}, \cite{Inaba2017}) that $\Phi$ has at least one positive fixed point if $r(\Phi'[0])=r(T)=R_0>1$, and it is unique from Proposition \ref{prop8}.  Therefore, 
$\Psi=P^{-1}\Phi P$ has a unique fixed point in $Y_+$ 
\end{proof}

\begin{proposition}\label{prop10}
Suppose that $\Psi$ has a unique positive fixed point $V \in Y_+ \setminus\{0\}$. Then it holds that
\begin{equation}
W(\infty,x) \ge V(x).
\end{equation}
\end{proposition}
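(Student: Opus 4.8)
The plan is to transport everything to the space $X$ via the order isomorphism $P$ (recall $P\psi=N\psi$ and $\Phi=P\Psi P^{-1}$), to notice that $\phi:=NW(\infty,\cdot)$ is a \emph{supersolution} of the fixed point equation $\phi=\Phi(\phi)$, and then to rerun the ``supremum of admissible scalar multiples'' comparison already used in the proofs of Propositions~\ref{prop3} and~\ref{prop8}, this time comparing $\phi$ with the unique positive fixed point $v:=NV$ of $\Phi$.

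First I would reduce the claim. Since $H(\infty,\cdot)\in Y_+$, the limit equation \eqref{20} gives $W(\infty,\cdot)\ge\Psi(W(\infty,\cdot))$ in $Y_+$, and applying the (order-preserving) map $P$ yields $\phi\ge\Phi(\phi)$ in $X_+$. By Propositions~\ref{prop7}--\ref{prop9}, $v=PV$ is \emph{the} positive fixed point of $\Phi$, and the $u_0$-positivity with $u_0=L\,aN$ from the proof of Proposition~\ref{prop7} forces, via $v=\Phi(v)$, the two-sided bound $\epsilon(v)\,L\,aN\le v\le\alpha\,\epsilon(v)\,L\,aN$ with $\epsilon(v)>0$.

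Next I would show that $\phi$ dominates a positive multiple of $v$. Here the nontriviality of the external force of infection enters: $q_*:=\sup_{t\ge0}Q(t)>0$, and since $t\mapsto H(t,x)$ is nondecreasing with $a(x)Q(t)\le H(t,x)$ by \eqref{ass111}, one gets $H(\infty,x)\ge q_*\,a(x)$, whence
\[
\phi=NW(\infty,\cdot)\ \ge\ NH(\infty,\cdot)\ \ge\ q_*\,aN\ \ge\ \frac{q_*}{\alpha\,\epsilon(v)\,L}\,v .
\]
Thus $S:=\{\mu\ge0:\ \phi\ge\mu v\}$ contains a positive number, and it is closed (as $X_+$ is closed) and bounded (as $\|\phi\|_X<\infty$ and $\|v\|_X>0$), so $k:=\sup S$ is attained with $0<k<\infty$. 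I would then argue by contradiction: if $k<1$, using $\phi\ge\Phi(\phi)$, monotonicity of $\Phi$ applied to $\phi\ge kv$, the identity $\Phi(v)=v$, the strict concavity inequality \eqref{epositive} (which gives $\Phi(kv)\ge k\Phi(v)+\eta(v,k)\,aN$ with $\eta(v,k)>0$), and $aN\ge(\alpha\epsilon(v)L)^{-1}v$, one obtains
\[
\phi\ \ge\ \Phi(kv)\ \ge\ kv+\eta(v,k)\,aN\ \ge\ \left(k+\frac{\eta(v,k)}{\alpha\,\epsilon(v)\,L}\right)v ,
\]
contradicting the maximality of $k$. Hence $k\ge1$, i.e.\ $\phi\ge v$, i.e.\ $W(\infty,\cdot)\ge V$ in $Y_+$.

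The step I expect to be the main obstacle is exactly the lower bound of the previous paragraph: one must guarantee that $W(\infty,\cdot)$ stays \emph{uniformly} bounded below along the quasi-interior direction $aN$, and this is where the standing hypothesis that an external infection is actually present ($Q\not\equiv0$, i.e.\ $q_*>0$) is indispensable --- for $Q\equiv0$ the trivial final state $W(\infty,\cdot)\equiv0$ would be admissible and the asserted inequality would be false. Once that estimate is secured, the remainder is a verbatim copy of the comparison technique from Propositions~\ref{prop3} and~\ref{prop8}.
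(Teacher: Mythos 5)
Your argument is correct, but it follows a genuinely different route from the paper's. The paper proves Proposition~\ref{prop10} by monotone iteration: it introduces $\mathcal H(\psi):=H(\infty,\cdot)+\Psi(\psi)$, of which $W(\infty,\cdot)$ is a fixed point by \eqref{20}, starts the iteration $\phi_n=\mathcal H(\phi_{n-1})$ at $\phi_0=V$, notes $\phi_1\ge\Psi(V)=V$ so the sequence is nondecreasing, and identifies the increasing limit with $W(\infty,\cdot)$. You instead transport to $X$ via $P$, observe that $NW(\infty,\cdot)$ is a supersolution of $\phi=\Phi(\phi)$, and rerun the sup-of-admissible-multiples comparison against the fixed point $NV$ using the concavity inequality \eqref{epositive} and the $u_0$-positivity from Proposition~\ref{prop7} --- i.e.\ the same machinery as Propositions~\ref{prop3} and~\ref{prop8}. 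Your route is longer but buys two things. First, it makes explicit where nontriviality of the forcing enters: you need $\phi\ge\mu v$ for some $\mu>0$ to start the comparison, which you obtain from $H(\infty,\cdot)\ge q_*a$ with $q_*>0$; and you correctly note the proposition is simply false when $H\equiv 0$ (then $W\equiv 0$ by uniqueness while $V>0$). The same bound could alternatively be extracted from $W(\infty,\cdot)\ge\Psi(W(\infty,\cdot))\ge La(\cdot)\int_\Omega cN(1-e^{-W(\infty,\cdot)})d\sigma$ once one knows $W(\infty,\cdot)\neq 0$, so the essential hypothesis is just nontriviality of the final state. Second, your argument avoids the step the paper leaves implicit, namely that the increasing limit of the iterates equals $W(\infty,\cdot)$ rather than some other fixed point of $\mathcal H$ lying above $V$; that identification needs uniqueness of the positive fixed point of $\mathcal H$, which the paper does not establish there (though it would follow from the very concavity argument you deploy). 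In short: correct proof, different and in one respect more careful than the paper's.
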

\begin{proof}
Define a positive operator $\mathcal H$ in $Y_+$ by $(\mathcal H\phi)(x):=H(\infty,x)+\Psi(\phi)(x)$.  Then $W(\infty,\cdot)$ is its fixed point and $\mathcal H$ is a monotone operator.  Define a sequence $\phi_n$ by $\phi_n=\mathcal H(\phi_{n-1})$ ($n \ge 1$) and $\phi_0=V$.  Then we have $\phi_1 \ge \Psi(\phi_0)=\phi_0$, so it follows iteratively that $\phi_n \ge \phi_{n-1}$.  Then we conclude that
$\lim_{n \to \infty}\phi_n(x)=W(\infty,x) \ge \phi_0(x)=V(x)$.
\end{proof}

From propositions \ref{prop8}, \ref{prop9} and \ref{prop10}, we know that is, if $R_0>1$, the epidemic occurs for all traits $x \in \Omega$ no matter how small the initial size of the infecteds is, and its final size distribution is estimated from below by the final size distribution $V$.  On the other hand, the final size goes to zero if the initial size of infecteds goes to zero under the condition $R_0 \le  1$.

\section{The effective reproduction number, control and HIT}

In order to consider the herd immunity threshold and to estimate the effect of preventive control, we introduce here the idea of the effective reproduction number.
Preventive control that works by reducing the availability of susceptibles in the host population is called {\bf S-control}, while the {\bf I-control} means that the preventive control acts primarily on the infectivity of the host individuals \cite{Heesterbeek2006}.

Let $Z:=L^\infty(\Omega) \subset Y$.
For $\phi \in Z_+$, we define the {\bf effective next generation operator at state $\phi \in Z_+$} for S-control by $T_S[\phi]: X_+ \to X_+$:
\begin{equation}
(T_{S}[\phi] \psi)(x):=N(x)e^{-\phi(x)}\int_{\Omega} \Theta(x, \sigma)\psi(\sigma) d\sigma, ~\psi \in X,
\end{equation}
and for I-control by $T_I[\phi]: X_+ \to X_+$:
\begin{equation}
(T_{I}[\phi] \psi)(x):=N(x)\int_{\Omega} \Theta(x, \sigma)e^{-\phi(\sigma)}\psi(\sigma) d\sigma, ~\psi \in X,
\end{equation}
which are both bounded linear operators in $X$. From our assumption \ref{ass2}, both operators are compact and nonsupporting, and $T_S[0]=T_I[0]$ is none other than the next generation operator.

The biological meaning of the {\it state} $\phi$ in the S-control is that $e^{-\phi(x)}$ is the reduction of susceptibility, or the fraction $1-e^{-\phi(x)}$ is immunized by vaccination or by natural infection, so the susceptible density $N(x)$ is replaced by $N(x)e^{-\phi(x)}$ in the next generation operator $T$ to get $T_S$.  On the other hand, in the I-control, $e^{-\phi(x)}$ is the reduction of infectivity, or $1-e^{-\phi(x)}$ is removed from the infectious state by vaccination or by non-pharmaceutical intervention.

The spectral radius of $T_S[\phi]$ ($T_I[\phi]$) is defined as the {\bf effective reproduction number} in the state $\phi$ for S-control (I-control), denoted by $R_{S}[\phi]$ ($R_{I}[\phi]$), that is, 
\begin{equation}
R_S[\phi]=r(T_S[\phi]), \quad R_I[\phi]=r(T_I[\phi]),
\end{equation}
Note that $R_S[0]=R_I[0]$ is none other than the basic reproduction number $R_0$.  

\begin{proposition}\label{prop11}
If $\phi \in Z_+$, it holds that $R_S[\phi]=R_I[\phi]$.
\end{proposition}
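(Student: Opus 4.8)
The plan is to show that $T_S[\phi]$ and $T_I[\phi]$ have the same nonzero spectrum by exhibiting a similarity transformation between them, and then to invoke Proposition~\ref{prop5} (which applies verbatim to both operators by the remark preceding Proposition~\ref{prop11}) so that the spectral radius is in each case the dominant eigenvalue. First I would fix $\phi \in Z_+ = L^\infty_+(\Omega)$ and introduce the multiplication operator $(M_\phi \psi)(x) := e^{-\phi(x)}\psi(x)$ acting on $X = L^1(\Omega)$. Since $\phi$ is bounded, $M_\phi$ is a bounded linear operator on $X$; it is injective, and its inverse $M_\phi^{-1} = M_{-\phi}$ is bounded as well, so $M_\phi$ is a bounded bijection of $X$ onto itself. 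The key algebraic identity is then
\begin{equation}
T_S[\phi] = M_\phi \, T_I[\phi] \, M_\phi^{-1},
\end{equation}
which one checks directly: applying $M_\phi^{-1}$ first multiplies the argument by $e^{\phi(\sigma)}$, the operator $T_I[\phi]$ reinserts the factor $e^{-\phi(\sigma)}$ inside the integral and leaves $N(x)$ outside, and the final $M_\phi$ multiplies the result by $e^{-\phi(x)}$, producing exactly $(T_S[\phi]\psi)(x)$.

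From this similarity it follows immediately that $T_S[\phi]$ and $T_I[\phi]$ have the same spectrum, hence the same spectral radius, so $R_S[\phi] = r(T_S[\phi]) = r(T_I[\phi]) = R_I[\phi]$. If one prefers to avoid relying on a bounded similarity being a genuine similarity of spectra (it is, since conjugation by an invertible bounded operator sends $\lambda I - T_I[\phi]$ to $\lambda I - T_S[\phi]$ up to the same conjugation and thus preserves invertibility), one can argue instead at the level of eigenvalues: by the analogue of Proposition~\ref{prop5} for $T_S[\phi]$ and $T_I[\phi]$, each spectral radius is a genuine eigenvalue with positive eigenvector, and if $T_I[\phi]\psi = r(T_I[\phi])\psi$ with $\psi \in X_+\setminus\{0\}$, then $M_\phi \psi \in X_+\setminus\{0\}$ satisfies $T_S[\phi](M_\phi\psi) = r(T_I[\phi])(M_\phi\psi)$, so $r(T_I[\phi])$ is an eigenvalue of $T_S[\phi]$ with positive eigenvector; since $T_S[\phi]$ is nonsupporting, its only eigenvalue admitting a positive eigenvector is $r(T_S[\phi])$, whence $r(T_I[\phi]) = r(T_S[\phi])$.

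I do not expect a serious obstacle here; the only point requiring a little care is the boundedness and invertibility of $M_\phi$ on $X = L^1(\Omega)$, which is precisely where the hypothesis $\phi \in Z = L^\infty(\Omega)$ (rather than merely $\phi \in Y$) is used: $e^{-\phi}$ and $e^{\phi}$ are both in $L^\infty$, so multiplication by either is a bounded operator on $L^1$. One should also note in passing that $M_\phi$ maps the positive cone $X_+$ onto itself, so the transformation is compatible with the positivity structure used in Proposition~\ref{prop5}. With these observations the proof is short.
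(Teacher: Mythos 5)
Your proof is correct and follows essentially the same route as the paper: both exhibit the similarity $T_S[\phi]=M_\phi\,T_I[\phi]\,M_\phi^{-1}$ via the multiplication operator $e^{-\phi}$, whose invertibility on $L^1$ is exactly where $\phi\in Z_+=L^\infty_+(\Omega)$ is used. Your additional eigenvector-based argument is a harmless (and slightly more careful) supplement, but not a different approach.
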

\begin{proof}
Let $U[\phi]: X_+ \to X_+$ be a positive operator defined by $(U[\phi]\psi)(x)=e^{-\phi(x)}\psi(x)$.  Then if $\phi \in Z_+$, then $U^{-1}$ is a bounded linear operator, and it follows that $T_S[\phi]=UT_I[\phi]U^{-1}$, that is, $T_S[\phi]$ and $T_I[\phi]$ are similar operators to each other. 
Then $R_S[\phi]=R_I[\phi]$ holds.
\end{proof}

If the vaccination coverage is given by $\epsilon(x)$, then the susceptible population is $(1-\epsilon(x))N(x)$. Let the state $\phi$ be given by  $\phi(x)=-\log (1-\epsilon(x))$, then $\epsilon$ is the vaccination coverage, and the partially immunized host population has the effective reproduction number given by $R_S[\phi]$.

Let $\Psi'[\phi]$ be the Fr\'echet derivative of $\Psi$ at $\phi \in Z_+$.  Then it holds that
\begin{equation}\label{63}
(\Psi'[\phi]y)(x)=\int_\Omega \Theta(x,\sigma)N(\sigma)e^{-\phi(\sigma)}y(\sigma)d\sigma, \quad y \in Z_+.
\end{equation}

\begin{proposition}\label{prop12}
For the effective reproduction number at state $\phi \in Z_+$, it follows that
\begin{equation}
R_S[\phi]=R_I[\phi]=r(\Psi'[\phi]).
\end{equation}
\end{proposition}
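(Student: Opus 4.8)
The plan is to exhibit $T_I[\phi]$ and $\Psi'[\phi]$ as conjugate operators via the isometric isomorphism $P\colon Y\to X$, $P\psi=N\psi$, and then to transfer the identity to $T_S[\phi]$ using Proposition~\ref{prop11}. Since $\phi\in Z_+=L^\infty_+(\Omega)$, Proposition~\ref{prop11} already gives $R_S[\phi]=R_I[\phi]$, so it suffices to prove $R_I[\phi]=r(\Psi'[\phi])$. Recall that $P$ is an isometric isomorphism of $Y$ \emph{onto} $X$ (since $Y=\{N^{-1}\psi:\psi\in X\}$ and $\|\psi\|_Y=\|N\psi\|_X$), a point which will be used crucially below.

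First I would check that formula \eqref{63} in fact defines a bounded positive linear operator on all of $Y$, not merely on $Z_+$. From \eqref{ass11} one gets the two-sided estimate $La(x)c(\sigma)\le\Theta(x,\sigma)\le\alpha La(x)c(\sigma)$ with $L=\int_0^\infty b(\tau)\,d\tau$; together with $a\in Y_+$ and $c\in L^\infty_+(\Omega)$ from Assumption~\ref{ass2}, and using $e^{-\phi}\le 1$, this yields $0\le(\Psi'[\phi]y)(x)\le\alpha L\|c\|_\infty\,a(x)N(x)\|y\|_Y$ for $y\in Y_+$, so $\Psi'[\phi]$ extends to a bounded operator on $Y$ (and $Z$ embeds continuously into $Y$, since $\|y\|_Y\le\|y\|_\infty\|N\|_X$).

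Next comes the key computation. For $y\in Y$ one has $Py=Ny\in X$, hence
\begin{equation*}
(P^{-1}T_I[\phi]Py)(x)=N(x)^{-1}\,N(x)\int_\Omega\Theta(x,\sigma)e^{-\phi(\sigma)}N(\sigma)y(\sigma)\,d\sigma=(\Psi'[\phi]y)(x),
\end{equation*}
that is, $\Psi'[\phi]=P^{-1}T_I[\phi]P$ on $Y$, equivalently $T_I[\phi]=P\Psi'[\phi]P^{-1}$ on $X$. Because $P$ is an isometric isomorphism, $\|(\Psi'[\phi])^n\|_{B(Y)}=\|(T_I[\phi])^n\|_{B(X)}$ for every $n\ge 1$, so Gelfand's spectral radius formula gives $r(\Psi'[\phi])=r(T_I[\phi])=R_I[\phi]$. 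Combining this with Proposition~\ref{prop11} yields $R_S[\phi]=R_I[\phi]=r(\Psi'[\phi])$, as claimed.

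The only genuinely delicate point is the bookkeeping across the two spaces $X$ and $Y$: one must keep the domains straight, making sure that $r(\Psi'[\phi])$ is unambiguously the spectral radius computed on $Y$, that $P$ is a true isometric isomorphism onto $X$ (so that similarity really does preserve the spectrum here, even though the underlying spaces differ), and — exactly as in Proposition~\ref{prop11} — that $\phi\in L^\infty_+(\Omega)$ is what makes the relevant multiplication operators boundedly invertible. None of this is deep; it is a matter of careful identification rather than analysis.
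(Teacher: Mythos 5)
Your proof is correct and takes essentially the same route as the paper: both arguments exhibit $\Psi'[\phi]$ as a similarity transform of an effective next generation operator and then invoke Proposition~\ref{prop11} — the paper conjugates $T_S[\phi]$ by the weighted map $(P_\phi y)(x)=N(x)e^{-\phi(x)}y(x)$, whereas you conjugate $T_I[\phi]$ by the unweighted isometry $P$, an equivalent (and if anything slightly cleaner) bookkeeping choice since $P$ is isometric independently of $\phi$. One trivial slip: in your boundedness estimate the factor $N(x)$ is spurious — the bound should read $(\Psi'[\phi]y)(x)\le \alpha L\|c\|_\infty a(x)\|y\|_Y$, which then gives $\|\Psi'[\phi]y\|_Y\le \alpha L\|c\|_\infty\|a\|_Y\|y\|_Y$; this does not affect the argument.
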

\begin{proof}
From proposition \ref{prop11}, we have $R_S[\phi]=R_I[\phi]$. Let $P_\phi$ be a bounded linear operator from $Y$ to $X$ given by $(P_\phi y)(x)=N(x)e^{-\phi(x)}y(x)$.
Then $\Psi'[\phi]$ is a similar operator of $T_S[\phi]$, that is, 
\begin{equation}\label{similar}
P_{\phi}^{-1}T_S[\phi] P_{\phi}=\Psi'[\phi].
\end{equation}
Therefore we have $R_S[\phi]=r(T_S[\phi])=r(\Psi'[\phi])$. 
\end{proof}

If $W(t,x)$ is the cumulative FOI of the renewal equation \eqref{sir*},  then
\begin{equation}
\begin{aligned}
(T_S[W(t,\cdot)] \psi)(x)
&=N(x)\exp(-W(t,x))\int_{\Omega} \Theta(x, \sigma)\psi(\sigma) d\sigma \cr
&=S(t,x)\int_{\Omega} \Theta(x, \sigma)\psi(\sigma) d\sigma.
\end{aligned}
\end{equation}
Then $R_S[W(t,\cdot)]$ gives the effective reproduction number at time $t$ in the natural epidemic process described by \eqref{sir*}.
In particular, we call $R_S[W(\infty,\cdot)]$ the {\bf final reproduction number}, which gives the reproduction number for the remaining susceptible population after the end of the epidemics.

\begin{proposition}\label{prop13}
Let $W(t,x)$ be the cumulative FOI for \eqref{sir*}.  Then $R_S[W(t,\cdot)]$ is monotonically decreasing with respect to $t$ and $R_S[W(\infty,\cdot)]<1$, and there exists a unique $t^*>0$ such that $R_S[W(t^*,\cdot)]=1$ if $R_0>1$.
\end{proposition}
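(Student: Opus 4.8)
The plan is to study the scalar function $r(t):=R_S[W(t,\cdot)]=r(T_S[W(t,\cdot)])$ and show it is continuous on $[0,\infty]$, strictly decreasing, with $r(0)=R_0>1$ and $r(\infty)<1$; the three assertions then follow at once. Before that I would record the following structural facts. Writing $D_g$ for the multiplication operator $(D_g\psi)(x)=g(x)\psi(x)$, one has $T_S[\phi]=D_{e^{-\phi}}T$ and $T_I[\phi]=TD_{e^{-\phi}}$ for \emph{every} measurable $\phi\ge 0$; since $D_{e^{-\phi}}$ (a contraction on $L^1$) and $T$ are bounded, $T_S[\phi]$ and $T_I[\phi]$ have the same nonzero spectrum, so $R_S[\phi]$ is well defined for $\phi=W(t,\cdot)\in Y_+$ even though $W(t,\cdot)$ need not lie in $Z_+$. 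Moreover $0\le T_S[W(t,\cdot)]\le T$, so $T_S[W(t,\cdot)]$ is compact (being dominated by the compact $T$, or by the Fr\'echet--Kolmogorov argument of Propositions \ref{prop5} and \ref{prop7}); and since $W(t,\cdot)\in Y$ is finite a.e. while $a,c>0$ a.e. by Assumption \ref{ass2}, the bound $\Theta(x,\sigma)\ge La(x)c(\sigma)$ gives $(T_S[W(t,\cdot)]^n\psi)(x)\ge\text{const}\cdot e^{-W(t,x)}N(x)a(x)$, a quasi-interior point, so $T_S[W(t,\cdot)]$ is nonsupporting. Hence, for each $t\in[0,\infty]$, $r(T_S[W(t,\cdot)])$ is the dominant eigenvalue, with a positive eigenvector and a strictly positive eigenfunctional.

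For monotonicity I would use that $t\mapsto W(t,\cdot)$ is nondecreasing in $Y_+$ (from the construction in Proposition \ref{prop4}), so $t_1\le t_2$ forces $e^{-W(t_1,\cdot)}\ge e^{-W(t_2,\cdot)}$ and hence $0\le T_S[W(t_2,\cdot)]\le T_S[W(t_1,\cdot)]$; monotonicity of the spectral radius of positive operators gives $r(t_2)\le r(t_1)$. To upgrade this to strict monotonicity, observe that the force of infection $F(t,\cdot)=\partial_t W(t,\cdot)$ cannot vanish identically on a time interval once the outbreak has started (it is bounded below by the external input and then reproduces itself through the nonnegative kernel $A$), so $W(t_1,\cdot)\ne W(t_2,\cdot)$ for $t_1<t_2$; pairing the positive eigenvector $v$ of $T_S[W(t_2,\cdot)]$ with the strictly positive eigenfunctional $f^*$ of $T_S[W(t_1,\cdot)]$ then yields $r(t_2)\langle f^*,v\rangle=\langle f^*,T_S[W(t_2,\cdot)]v\rangle<\langle f^*,T_S[W(t_1,\cdot)]v\rangle=r(t_1)\langle f^*,v\rangle$ with $\langle f^*,v\rangle>0$, whence $r(t_2)<r(t_1)$.

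The endpoint $r(0)=r(T)=R_0$ is immediate from $W(0,\cdot)=0$. For $r(\infty)<1$ I would exploit the limit equation \eqref{sir**} together with the elementary inequality $1-e^{-y}>ye^{-y}$ for $y>0$. Since $W(\infty,\cdot)>0$ a.e. --- because $W(\infty,x)\ge H(\infty,x)\ge a(x)\|Q\|_\infty>0$ a.e. under the standing assumption that external infection is present, or else $W(\infty,\cdot)\ge V>0$ by Proposition \ref{prop10} when $R_0>1$ --- multiplying \eqref{sir**} by $N(x)$ and applying that inequality pointwise in $\sigma$ gives, with $w:=NW(\infty,\cdot)\in X_+\setminus\{0\}$,
\[
w(x)=N(x)H(\infty,x)+N(x)\int_\Omega\Theta(x,\sigma)N(\sigma)\bigl(1-e^{-W(\infty,\sigma)}\bigr)\,d\sigma>\bigl(T_I[W(\infty,\cdot)]w\bigr)(x)\quad\text{a.e.}
\]
Testing against the strictly positive eigenfunctional $g^*$ of the compact nonsupporting operator $T_I[W(\infty,\cdot)]$ (which has the same spectral radius $R_S[W(\infty,\cdot)]$ as $T_S[W(\infty,\cdot)]$) yields $\langle g^*,w\rangle>R_S[W(\infty,\cdot)]\langle g^*,w\rangle$ with $\langle g^*,w\rangle>0$, i.e. $R_S[W(\infty,\cdot)]<1$.

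Finally, continuity of $t\mapsto r(t)$ follows because $t\mapsto W(t,\cdot)$ is continuous into $Y$ (including $t\to\infty$, by Proposition \ref{prop4}), so $D_{e^{-W(t,\cdot)}}\to D_{e^{-W(t_0,\cdot)}}$ strongly (dominated convergence along subsequences), hence $T_S[W(t,\cdot)]=D_{e^{-W(t,\cdot)}}T\to T_S[W(t_0,\cdot)]$ in operator norm (a uniformly bounded, strongly convergent family composed with the compact $T$), and the dominant eigenvalue of a compact operator is an isolated spectral point, hence depends continuously on the operator in norm. A continuous, strictly decreasing $r$ with $r(0)=R_0>1$ and $r(\infty)<1$ has, by the intermediate value theorem, exactly one zero of $r(t)-1$, which is the desired $t^*\in(0,\infty)$. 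I expect the main obstacle to be the two ``$Y$ versus $L^\infty$'' issues: first, that the states $W(t,\cdot)$ lie only in $Y_+$, so compactness, nonsupportingness and norm-continuity of $T_S[W(t,\cdot)]$ must be checked by hand rather than through the similarity arguments of Propositions \ref{prop11}--\ref{prop12}; and second, pinning down the strict monotonicity, which rests on the claim that $F(t,\cdot)$ --- equivalently $\partial_t W(t,\cdot)$ --- never vanishes on an interval after the epidemic has begun.
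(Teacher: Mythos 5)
Your proof follows essentially the same route as the paper's: monotonicity of the spectral radius of positive operators gives the decrease of $R_S[W(t,\cdot)]$; the inequality $1-e^{-y}\ge ye^{-y}$ applied to the limit equation \eqref{sir**}, paired against the strictly positive adjoint eigenfunctional of the (compact, nonsupporting) effective next generation operator, gives $R_S[W(\infty,\cdot)]<1$; and continuity plus the intermediate value theorem gives $t^*$. The additional care you take --- justifying $R_S[\phi]$ for the possibly unbounded state $\phi=W(t,\cdot)\in Y_+$ via the identity $\sigma(AB)\setminus\{0\}=\sigma(BA)\setminus\{0\}$ rather than the similarity argument of Propositions \ref{prop11}--\ref{prop12} (which needs $\phi\in Z_+$), establishing \emph{strict} monotonicity (which the uniqueness of $t^*$ actually requires and the paper only asserts), and sketching norm-continuity of $t\mapsto T_S[W(t,\cdot)]$ --- fills in precisely the points the paper's proof passes over in silence.
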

\begin{proof}
Note that $R_S[W(t,\cdot)]$ is monotonically decreasing with respect to $t$, since $W(t,\cdot)$ is monotonically increasing. 
It follows from $1-e^{-x} \ge e^{-x}x$ for $x \ge 0$ that
\begin{equation}\label{66}
\begin{aligned}
W(\infty,x)&=\int_\Omega \Theta(x,\sigma)N(\sigma)(1-e^{-W(\infty,\sigma)})d\sigma \cr
&\ge \int_\Omega \Theta(x,\sigma)N(\sigma)e^{-W(\infty,\sigma)}W(\infty,\sigma)d\sigma\cr
&=(\Psi'[W(\infty,\cdot)] W(\infty,\cdot))(x).
\end{aligned}
\end{equation}
From \eqref{66}, we obtain an inequality in $X$:
\begin{equation}\label{67}
(P_{\phi}W(\infty,\cdot))(x) \ge (T_S[\phi] P_{\phi}W(\infty,\cdot))(x).
\end{equation}
Let $f^*_{\phi} \in X^*_+$ be the adjoint eigenfunctional of $T_S[\phi]$ associated with the eigenvalue $R_S[W(\infty,\cdot)]=r(\Psi'[W(\infty,\cdot)])=r(T_S[\phi])$.  Taking the duality pairing in \eqref{67}, it follows that $\langle f^*_{\phi}, P_{\phi}W(\infty,\cdot)\rangle \ge R_S[W(\infty,\cdot)] \langle f^*_{\phi}, P_{\phi}W(\infty,\cdot)\rangle$, where the equality does not hold because $W(\infty,\cdot) \neq 0$.  Then we know that $R_S[W(\infty,\cdot)] <1$.  If $R_0=R_S[0]>1$, there exists a unique $t^*>0$ such that $R_S[W(t^*,\cdot)]=1$, because $r(\Psi'[W(t,\cdot)])$ is continuous, monotone decreasing with respect to $t$, and it moves from $R_0>1$ to $R_S[W(\infty,\cdot)]<1$ as $t$ goes from 0 to $\infty$.
\end{proof}

Traditionally, the {\bf herd immunity threshold (HIT)} is defined as the proportion of the host population that must be immune to prevent an outbreak. When the HIT is reached by mass vaccination, it can be called the {\bf critical coverage of immunization (CCI)}.

Define a subset $\mathcal W \subset Z_+$ by
\begin{equation}
\mathcal W:=\left\{\phi \in Z_+: R_S[\phi]=R_I[\phi]=1\right\}.
\end{equation}
and let $\epsilon(x)=1-e^{-\phi(x)}$ for $\phi \in \mathcal W$, then $\epsilon$ gives the critical coverage of intervention for S-control, or for I-control.  For the S-control, the HIT for $\phi \in \mathcal W$ is calculated as  
\begin{equation}
{\rm HIT[\phi]}=1-\frac{1}{N}\int_\Omega N(x)e^{-\phi(x)}dx =\int_\Omega \epsilon(x)\omega(x)dx,
\end{equation}
where $\omega(x)=N(x)/\int_{\Omega}N(\sigma)d\sigma$ denotes the trait profile of the host population.

That is, the HIT is not uniquely determined for the heterogeneous population.
As is seen in Proposition \ref{prop13}, for the natural epidemic process described by \eqref{sir*}, if $R_0>1$, there exists a unique time $t^*$ such that $W(t^*,\cdot) \in \mathcal W$.
 Then the HIT is reached at time $t^*$, but the time $t^*$ and the HIT for the natural infection generally depend on the initial data.
For the I-control, for $\phi \in \mathcal W$, $e^{-\phi(x)}$ gives the critical reduction level of infectivity.

\section{Results of the separable mixing assumption}

Here we consider the results of the separable mixing assumption:
\begin{equation}\label{separable}
A(\tau,x,\sigma)=a(x)b(\tau)c(\sigma), \quad H(t,x)=a(x)Q(t), 
\end{equation}
where we assume that $Q$ is a bounded continuous positive function on $\mathbb R_+$ and that $\lim_{t \to \infty}Q(t)=Q(\infty)>0$ exists.
Then we can concretely compute the basic indices defined so far.

\subsection{$R_S[\phi]$ and HIT}

From \eqref{63} and \eqref{separable}, we easily obtain expressions:
\begin{equation}
(\Psi'[\phi]y)(x)=L a(x) \int_\Omega c(\sigma)N(\sigma)e^{-\phi(\sigma)}y(\sigma)d\sigma, ~ y \in Y_+.
\end{equation}
\begin{equation}
\begin{aligned}
R_S[\phi]&=r(\Psi'[\phi])= L\int_\Omega c(\sigma)N(\sigma)e^{-\phi(\sigma)}a(\sigma)d\sigma \cr
&=R_0 \int_{\Omega}\kappa(x)e^{-\phi(x)} dx,
\end{aligned}
\end{equation}
where
\begin{equation}
\kappa(x):=\frac{a(x)c(x)N(x)}{\int_{\Omega}a(\sigma)c(\sigma)N(\sigma)d\sigma},
\end{equation}
Then it follows that
\begin{equation}
\begin{aligned}
\mathcal W&=\left\{\phi \in Z_+:  L\int_\Omega c(\sigma)N(\sigma)e^{-\phi(\sigma)}a(\sigma)d\sigma=1 \right\}.\cr
&=\left\{\phi \in Z_+: \int_{\Omega} \kappa(x)e^{-\phi(x)}dx = \frac{1}{R_0}\right\}.
\end{aligned}
\end{equation}
Note that $\mathcal W \neq \emptyset$, because a constant function 
\begin{equation}
\phi(x)=\log\left(L\int_\Omega c(\sigma)N(\sigma)a(\sigma)d\sigma \right)=\log R_0,
\end{equation}
 belongs to $\mathcal W$.  In this constant case, the HIT (or CCI) is given by
 \begin{equation}
 1-e^{-\phi}=1-\frac{1}{R_0},
 \end{equation}
which traditional result implies that the epidemic will not occur if the fraction greater than $1-1/R_0$ is immunized in the susceptible host population.

\subsubsection*{{\bf Example 2}}
It is easy to see that increasing the immunization fraction in populations with larger $\kappa(x)$ is more effective to lower the effective reproduction number. Let us show an example.
Again let us assume that $\Omega=[0,\infty)$, $a(x)=x$ and $c$ is constant.
Then we have
\begin{equation}
\kappa(x)=\frac{x \omega(x)}{\int_{0}^{\infty}x \omega(x)dx}. 
\end{equation}
  Suppose that
\begin{equation}\label{7.18}
\int_{0}^{\infty}(\epsilon(x)-\langle \epsilon, \omega \rangle)(x-\langle x, \omega \rangle)\omega(x)dx \ge 0,
\end{equation}
where $\langle f, \omega \rangle:=\int_{0}^{\infty}f(x)\omega(x)dx$ denotes the mean of $f$.
Then we obtain 
\begin{equation}\label{7.19}
\langle \epsilon, \kappa \rangle \ge \langle \epsilon, \omega \rangle,
\end{equation}
because if we expand \eqref{7.18}, we get $\langle \epsilon x, \omega \rangle- \langle \epsilon, \omega \rangle \langle x, \omega \rangle \ge 0$. It follows from $\langle \epsilon x, \omega \rangle/\langle x, \omega \rangle=\langle \epsilon, \kappa  \rangle$ that $\langle \epsilon, \kappa \rangle \ge \langle  \epsilon, \omega \rangle$. Therefore, if $\phi^*(x)=-\log (1-\epsilon^*(x)) \in \mathcal W$, we have
\begin{equation}\label{7.20}
\langle \epsilon^*, \kappa \rangle =1-\frac{1}{R_0}\ge \langle \epsilon^*, \omega \rangle.
\end{equation}
The condition \eqref{7.18} and its result \eqref{7.20} imply that if the group with above-average susceptibility is more immunized than the  average immunization rate, the critical immunization fraction of the population $\langle \epsilon^*, \omega \rangle$ is less than $1-1/R_0=\langle \epsilon^*, \kappa \rangle$, which is the HIT for the homogeneous case. It is intuitively obvious that selective immunization of susceptible groups will achieve herd immunity by immunizing a smaller percentage of the population.

\subsection{HIT in the natural epidemics}

In order to calculate the HIT in the natiral epidemic, observe that \eqref{sir*} is written as
\begin{equation}
W(t,x)=a(x)Q(t)+ a(x)\int_\Omega \int_{0}^{t}b(\tau)c(\sigma)N(\sigma)(1-e^{-W(t-\tau,\sigma)})d\tau d\sigma.
\end{equation}
Therefore, there exists a function $w(t)$ such that $W(t,a)=a(x)w(t)$ and
\begin{equation}\label{w}
w(t)=Q(t)+ \int_\Omega \int_{0}^{t}b(\tau)c(\sigma)N(\sigma)(1-e^{-a(\sigma)w(t-\tau)})d\tau d\sigma.
\end{equation}
From Proposition \ref{prop9}, we have
\begin{equation}\label{7.6}
R_S[W(t,\cdot)]= r(\Psi'[a(\cdot)w(t)])=L\int_{\Omega}c(\sigma)a(\sigma)e^{-a(\sigma)w(t)}N(\sigma)d\sigma.
\end{equation}
In this case, $w(t^*)$ satisfies 
\begin{equation}
R_S[W(t^*,\cdot)]=R_S[a(\cdot)w(t^*)]=1,
\end{equation}
 is uniquely determined as the unique positive root $\lambda$ of the characteristic equation
\begin{equation}
L  \int_{\Omega}c(\sigma)a(\sigma)e^{-a(\sigma)\lambda}N(\sigma)d\sigma=1.
 \end{equation}

Since $\lambda$ is independent of the initial data, the HIT in the natural epidemics is calculated as
\begin{equation}
{\rm HIT}[a\lambda]=1-\int_{\Omega}\omega(x)e^{-a(x)\lambda}dx,
\end{equation}
and it is also independent from the initial data, although the time $t^*$ such that $\lambda=w(t^*)$ depends on the initial data $Q$.

\subsection*{{\bf Example 3}}
Suppose that $\Omega=\mathbb R_+$, $a(x)=x$, $c(\sigma)=c$ is constant and the trait profile $\omega$ is given by the gamma function with mean $1$ and  variance $1/p$:
\begin{equation}\label{gamma}
\omega(x)=\frac{p^p}{\Gamma(p)}x^{p-1}e^{-x p}.
\end{equation}
Using the formula \eqref{7.6}, we can compute the effective reproduction number as follows
\begin{equation}\label{Re}
\begin{aligned}
R_S[W(t,\cdot)]&=NLc \int_0^{\infty}\omega(\sigma)\sigma e^{-\sigma w(t)}d\sigma \cr
&=\frac{NLc p^{p+1}}{(p+w(t))^{p+1}}=R_0 s(t)^{1+\frac{1}{p}}, 
\end{aligned}
\end{equation}
where
\[s(t):=\frac{1}{N}\int_{0}^{\infty}S(t,x)dx,\]
denotes the susceptible fraction at time $t$, and it is calculated as 
\begin{equation}
s(t)=\left(\frac{p}{p+w(t)}\right)^p.
\end{equation}
Then the HIT in the natural epidemics is given by 
\begin{equation}
1-s(t^*)=1-R_0^{-\frac{p}{p+1}},
\end{equation}
 which may be far from the homogeneous case of $p=\infty$.  These power-law results have been used in many studies related to COVID-19 \cite{Bootsma2023b, Diekmann2023, Montalban2022, Tkachenko2021a}.

\subsection{Final size}

In the renewal equation \eqref{w}, if we let $t \to \infty$, we obtain an equation for $w(\infty)$:
\begin{equation}\label{winfty}
w(\infty)=Q(\infty)+L\int_{\Omega}c(\sigma)N(\sigma)(1-e^{-a(\sigma)w(\infty)})d\sigma.
\end{equation}
  Then the final susceptible distribution is given by
\begin{equation}
S_\infty(x)=N(x)e^{-a(x)w(\infty)},
\end{equation}
and the attack rate (final size distribution) is calculated as
\begin{equation}
1-\frac{S_\infty(x)}{N(x)}=1-e^{-a(x)w(\infty)}
\end{equation}

If $R_0>1$,  
\begin{equation}
F(x):=L\int_{\Omega}c(\sigma)N(\sigma)(1-e^{-a(\sigma)x})d\sigma.
\end{equation}
has a unique fixed point $v>0$ and $w(\infty) \ge v$.   Then the final size is estimated from below as
\begin{equation}
1-\frac{S_\infty(x)}{N(x)} \ge 1-e^{-a(x)v}.
\end{equation}

\subsection{I-control and epidemic resurgence}

 As was seen above, the final reproduction number $R_e(\infty)$ is less than unity, the epidemic resurgence does not occur after the natural eradication of the disease.  
However, if the infectivity is reduced by non-pharmaceutical intervention (such as universal masking and social distance), after the end of the controlled outbreak, there could remain susceptible population whose effective reproduction number is greater than one.  Then if the intervention is lifted after the end of the first epidemic wave,  the resurgence of the epidemic could occur.  
To formulate the condition for the resurgence, we calculate the critical level of the control for the infectivity.

Again we adopt the separable mixing assumption.
Let $S_\infty[\phi](x)=N(x)e^{-w_\phi(\infty) a(x)}$ be the final susceptible distribution under the I-control with state $\phi$, 
where $w_\phi(\infty)$ is the positive root of the final size equation
\begin{equation}\label{7.27}
w_\phi(\infty)=Q(\infty)+L\int_{\Omega}c(x)e^{-\phi(x)}N(x)(1-e^{-w_\phi(\infty) a(x)})dx.
\end{equation}
Then the root $w_\phi(\infty)$ is non-increasing with respect to $\phi$. 

The effective next generation operator after lifting the prevention policy is given by
\begin{equation}
(T_{w_\phi(\infty) a}\phi)(x)=La(x)N(x)e^{-w_\phi(\infty) a(x)}\langle c, \phi \rangle.
\end{equation}
and the effective reproduction number is
\begin{equation}\label{96}
r(T_{w_\phi(\infty) a})=L\int_{\Omega}c(\sigma)N(\sigma)e^{-w_{\phi}(\infty)a(\sigma)}a(\sigma)d\sigma.
\end{equation}
Then we can define the set of critical I-control by
\begin{equation}
\Sigma:=\left\{\phi \in Z_+: L\int_{\Omega}c(\sigma)N(\sigma)e^{-w_\phi(\infty)a(\sigma)}a(\sigma)d\sigma=1\right\}.
\end{equation}

Let $\lambda^*$ be the unique positive root of the characteristic equation:
\begin{equation}\label{98}
L\int_{\Omega}c(\sigma)N(\sigma)e^{-\lambda^* a(\sigma)}a(\sigma)d\sigma=1,
\end{equation}
whose existence is guaranteed by the assumption $R_0=L\int_{\Omega}c(x)N(x)a(x)dx>1$.  So, we know that
\begin{equation}
\begin{aligned}
\Sigma&=\{\phi \in Z_+: w_\phi(\infty)=\lambda^* \} \cr
&=\left\{\phi: \lambda^*=Q(\infty)+L\int_{\Omega}c(x)e^{-\phi(x)}N(x)(1-e^{-\lambda^* a(x)})dx \right\}.
\end{aligned}
\end{equation}

If $\lambda^*>Q(\infty)$,  $\Sigma$ is not empty.  In fact, if assume that $\phi$ is a constant function $\phi(x)=\phi^*$, we can easily see that
\begin{equation}\label{100}
\phi^*=-\log \left(\frac{\lambda^*-Q(\infty)}{L\int_{\Omega}c(x)N(x)(1-e^{-\lambda^* a(x)})dx}\right) \in \Sigma.
\end{equation}
If the I-control is so strong as $\phi(x)>\phi^*$ for all $x \in \Omega$, the resurgence occurs if the prevention policy is lifted after the previous epidemics have ended.

\subsubsection*{{\bf Example 4}}
In general, it is difficult to calculate the threshold value $\lambda^*$ explicitly.  However, this is not the case if we consider the limiting epidemics described by the homogeneous model with constant parameters.
Assume that the host population is homogeneous, all parameters are constant, so assume that
$a(x)=\beta$, $b(\tau)=e^{-\gamma \tau}$ and $c=1$.
 From \eqref{98}, we have 
 \begin{equation}
 L\int_{\Omega}c(\sigma)N(\sigma)e^{-\lambda^* a(\sigma)}a(\sigma)d\sigma=R_0e^{-\lambda^*\beta}=1,
\end{equation}
hence we have $\lambda^*=(\log R_0)/\beta$.
From \eqref{100} and assuming that $Q(\infty)=0$, it follows that
\begin{equation}\label{101}
\phi^*=-\log \left(\frac{\lambda^*}{\frac{N}{\gamma}\int_{\Omega}\omega(x)(1-e^{-\lambda^* \beta})dx}\right) =-\log\left(\frac{\log R_0}{R_0-1} \right).
\end{equation}
In the homogeneous model with constant parameters, if the transmission control is so strong that $\phi>\phi^*$, and $Q$ is negligibly small (that is, the initial size of the infected is so small),  the supercritical size of the susceptible population remains after the end of the previous epidemic wave, and a resurgence can occur.

\section{Final remarks}\label{sec12}

In the classical theory of the Kermack MacKendrick model, the  basic epidemiological concepts as the basic reproduction number, the effective reproduction number, the herd immunity threshold and the final size can be mathematically formulated by using the renewal integral equation.  The threshold principle tells us that an outbreak occurs in a totally susceptible population if and only if $R_0>1$ and the lower bound of the final size is given as the unique root of the final size equation.   In this paper, we have extended these basic epidemiological concepts and results to allow the theory to account for individual static heterogeneity.  
Our main results provide a mathematical basis for the recent COVID-19-inspired arguments and calculations based on the heterogeneous Kermack-McKendrick model. 

Among the COVID-19-related studies, the most notable ones were that individual variation in susceptibility lowers the herd immunity threshold, and it was shown that the infinite-dimensional compartment model with individual heterogeneity described by the gamma distribution can be transformed into a finite-dimensional ordinary differential equation system with power-law interaction term.

To justify these arguments, we have to deal with unbounded susceptibility and infectivity with non-compact domain.  Our assumption \ref{ass1} is sufficient to prove the well-posedness of the heterogeneous model with unbounded parameters.  On the other hand, our assumption \ref{ass2}  still excludes the unbounded infectivity, so it is a future challenge to construct a theory to deal with more general parameters.  Moreover, it would also be necessary to consider the dynamic heterogeneity to understand the real epidemic waves.

\bmhead{Acknowledgments}
This work was supported by JSPS KAKENHI Grant Number 22K03433 and Japan Agency for Medical Research and Development (JP23fk0108685).

\section*{Declarations}
Not applicable

\bibliography{sn-bibliography}


\end{document}